\newcommand{\bb}[1]{\mathbb{#1}}
\newcommand{\vect}[1]{\mathbf{#1}}
\newcommand{\dope}[1]{\mathrm{dope}({#1})}
\newcommand{\cdope}[1]{\mathrm{cdope}({#1})}
\newtheorem{lemma}{Lemma}[section]
\newtheorem{proposition}{Proposition}[section]
\theoremstyle{definition}
\newtheorem{definition}{Definition}[section]
\newcommand{\Mod}[1]{\ (\mathrm{mod}\ #1)}
\title{The DOPE Distance is SIC: A Stable, Informative, and Computable Metric on Time Series And Ordered Merge Trees}
\author{Christopher J. Tralie, Zachary Schlamowitz, Jose Arbelo, \\Antonio I. Delgado, Charley Kirk, Nicholas A. Scoville}
\begin{document}
\maketitle
\begin{abstract}

Metrics for merge trees that are simultaneously stable, informative, and efficiently computable have so far eluded researchers.  We show in this work that it is possible to devise such a metric when restricting merge trees to ordered domains such as the interval and the circle.  We present the ``dynamic ordered persistence editing'' (DOPE) distance, which we prove is stable and informative while satisfying metric properties.  We then devise a simple $O(N^2)$ dynamic programming algorithm to compute it on the interval and an $O(N^3)$ algorithm to compute it on the circle.  Surprisingly, we accomplish this by ignoring all of the hierarchical information of the merge tree and simply focusing on a sequence of ordered critical points, which can be interpreted as a time series.  Thus our algorithm is more similar to string edit distance and dynamic time warping than it is to more conventional merge tree comparison algorithms.  In the context of time series with the interval as a domain, we show empirically on the UCR time series classification dataset that DOPE performs better than bottleneck/Wasserstein distances between persistence diagrams.

\end{abstract}

\section{Introduction}
\label{sec:intro}

\begin{figure}[h]
    \centering
    \includegraphics[width=\textwidth]{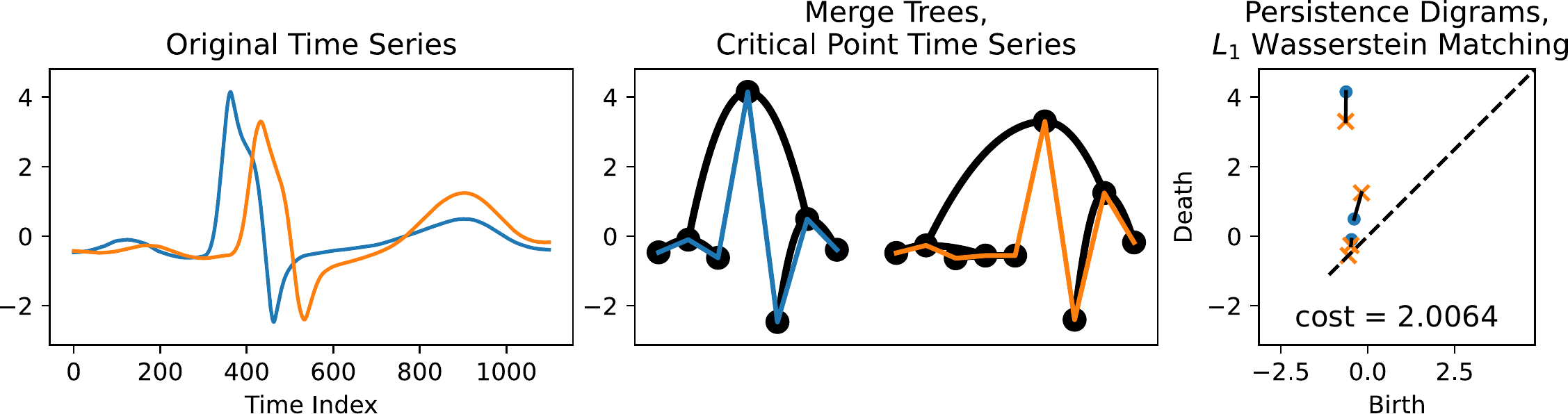}
    \caption{An example of two smoothed, z-normalized ECG signals from the UCR dataset \cite{dau2019ucr}.  These are considered to be part of the same class, but no uniform rescaling can align them perfectly. On the other hand, topological signatures, such as merge trees and persistence diagrams that summarize them, are blind to parameterization, though merge trees can be difficult to compare.}
    \label{fig:ECGConcept}
\end{figure}

\begin{figure}
    \centering
    \includegraphics[width=\textwidth]{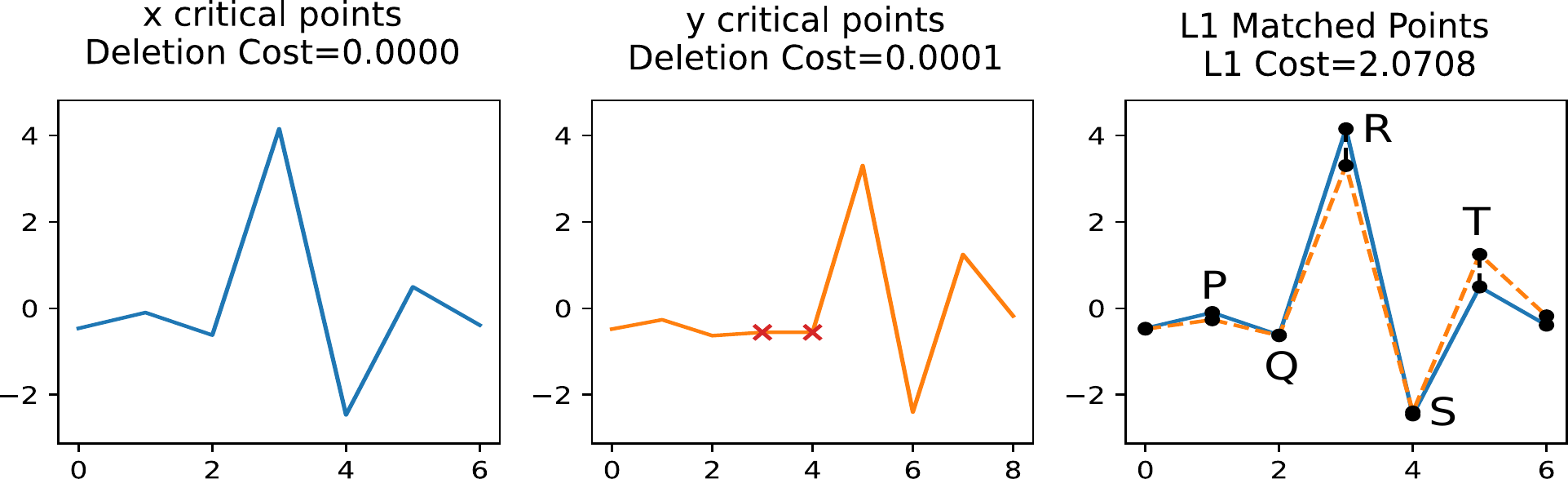}
    \caption{By contrast to Figure~\ref{fig:ECGConcept}, DOPE matching throws away the tree hierarchy and instead compares the ``critical point time series'' directly.  In this example, the critical points that DOPE matches pick up on the characteristic PQRST ECG pattern of a heartbeat.  Note also that the Wasserstein matching cost (Figure~\ref{fig:ECGConcept}) does not exceed the DOPE cost; this is a general ``informativity'' property that we prove in Section~\ref{sec:informativity}.}
    \label{fig:ECGConceptDOPE}
\end{figure}

One dimensional time series comparison is a ubiquitous problem, with application domains in such disparate fields as comparing eye track traces \cite{fang2018electrooculography}, ECG heartbeat analysis \cite{olszewski2001generalized}, earthquake analysis \cite{arul2021applications}, lightning strike categorization \cite{eads2002genetic}, phoneme recognition \cite{hamooni2014dual}, activity recognition via motion capture, and many more (see, for instance, the UCR time series classification database \cite{dau2019ucr}, which has 128 time series datasets).  One challenge that occurs across many domains is that the time series in the same class may be ``warped,'' or subject to re-parameterizations more complicated than uniform shifts or scales, hampering retrieval using sample to sample Euclidean distance comparisons.  Figure~\ref{fig:ECGConcept} shows an example of this in ECG readings of a heartbeat.  There are myriad techniques in the literature which are blind to or insensitive to such warps (Section~\ref{sec:background}), but none of them simultaneously satisfy a desirable set of theoretical properties that we outline in Section~\ref{sec:desiderata}.

In this work, we define a parameterization blind dissimilarity measure between 1D time series that we dub ``Dynamic Ordered Persistence Editing (DOPE)'' that simultaneously satisfies polynomial time computability, metric properties, stability, and informativity, all of which we define precisely in Section~\ref{sec:desiderata}.  We arrive at DOPE by focusing on techniques that build a topological structure known as a merge tree (otherwise known as a ``join tree'' \cite{carr2003computing} or a ``barrier tree'' \cite{flamm2002barrier}) on time series.  These trees organize a hierarchy of critical points, which is preserved under re-parameterization.  DOPE instead ignores the hierarchy completely and does an edit distance on the critical points as a {\em sequence} implied by traversing this hierarchy in order.  Since it is an edit distance, we easily infer a mapping from critical points of one time series to another.  Moreover, an advantage of this over more traditional ordered time series comparison tools, such as dynamic time warping \cite{sakoe1970similarity,sakoe1978dynamic}, is that it gives a parsimonious and intuitive representation of the matching, since it's restricted to critical points.

Beyond time series, DOPE is also a metric for merge trees on the interval and on circles; in fact, comparing (circular) time series and comparing ordered merge trees are equivalent with DOPE.  By contrast, most techniques for comparing merge trees have very general domains (Section~\ref{sec:bgmergetree}) which make it difficult to satisfy our four desired properties simultaneously.

\subsection{Time Series And Merge Trees}

We now define time series, critical points, and merge trees in a manner that suits our purposes, and we elucidate a connection between time series and merge trees.  

\begin{definition}
\label{def:timeseries}
A \textbf{1D time series} is a time-ordered sequence of $N$ numbers $\mathbf{x} = [\vect{x}_0, \vect{x}_1, \ldots , \vect{x}_{N-1}]$.  A \textbf{circular 1D time series} is an equivalence class of sequences with an equivalence relation given by a circular shift; that is, given a representative time series $\vect{x}$, it is the equivalence class $\{  \overset{\rightarrow k}{\vect{x}} \}_{k=0}^{N-1}$, where $\overset{\rightarrow k}{\vect{x}} = [\vect{x}_{k \mod N}, \vect{x}_{1+k \mod N}, \ldots , \vect{x}_{N-1+k \mod N}]$.

\end{definition}

1D time series can be thought of as an ordered collection of samples from a continuous function on a topological interval domain, while 1D circular time series can be thought of as samples from a continuous function on a topological circular domain.

\begin{definition}
    \label{def:critpoint}
    A \textit{critical point} is a local extremum of a time series.  More specifically, for interior points ($1 < i < N$) of a 1D time series and all circular points, $x_i$ is a {\em local min} if $x_i < x_{i-1 \mod N}, x_i < x_{i+1 \mod N}$ and $x_i$ is a {\em local max} if $x_i > x_{i-1 \mod N}, x_i > x_{i+1 \mod N}$.  As a special case, the endpoints of a time series are only considered critical points if they are less than their adjacent neighbor, in which case they are local mins.  
\end{definition}

The special case is to help with a correspondence to merge trees, as we explain below.  Now that we have critical points, we can form a time series out of them alone.

\begin{definition}
We associate to any time series $\mathbf{x}$ its \textbf{critical times series} $\mathbf{x}^c$ given by removing all non-critical (regular) points of $\mathbf{x}$ and reindexing according to the order of the remaining points of $\mathbf{x}^c$.  We further define $\vect{x}^{cm}_i$ to be an \textit{indicator function} that determines if $\vect{x}^c_i$ is a min or max, i.e., $\vect{x}^{cm}_i = -1$ if $\vect{x}^c_i$ is a min and $\vect{x}^{cm}_i = 1$ if $\vect{x}^c_i$ is a max.   
\end{definition}

The critical point time series is a sparse set of important information about a 1D function, and it includes everything needed to specify the function up to a parameterization.  Due to the Euler characteristic $\chi$, every time series has an odd number of critical points (starting and ending with a min), and every circular time time series has an even number of them.

Critical points of a scalar function on some domain can be organized into a hierarchical structure known as a \textbf{merge tree}.  The general definition is as follows:

\begin{definition}
\label{def:mergetree}
Given a scalar function $f: X \to \mathbb{R}$, define a \textbf{sublevelset} of this function as $S_f(t) := \{ (x,t) \in X \times \mathbb{R}: f(x) \leq t \}$.  Consider the following equivalence relation: $(x, y) \in S_f(\infty) \sim (x', y) \in S_f(\infty)$ if and only if there there is a path $\gamma: [0, 1] \to S_f(y)$ so that $\gamma(0)=(x, y)$ and $\gamma(1)=(x', y)$.  Then the \textbf{merge tree} associated to $f$ is the quotient space of $S_f(\infty)$ under this equivalence relation.
\end{definition}

Intuitively, a sublevelset is the pools of water that form with the graph of a function as a basin, and a merge tree tracks the different pools of water at different heights $y$.  When $y$ passes through a local min, a new pool is ``born,'' and a leaf node exists in a tree.  When $y$ passes through a (generic) saddle point, two pools merge and one of them ``dies'' and merges to the other, and there is an internal node.  In the case where the domain $X$ is $\mathbb{R}$, this is quite simple; there is a leaf node for each local min and an internal node for each local max.  Thus, we henceforth refer to min-max pairs rather than min-saddle pairs.

We can associate to each merge tree a summary known as a \textbf{persistence diagram} $\text{DGM} = \{ (b_1, d_1), (b_2, d_2), \hdots, (b_n, d_n) \}$, which consists of pairs of mins and maxes that correspond to the creation, or ``birth,'' of a component in $S_f(t)$ at height $b$ and the merging of that component to another, or a ``death,'' at height $d$, respectively.  The so-called ``elder rule'' pairs to each max the min associated to the most recently born connected component out of the two.  The other min becomes the representative for the merged component.

\begin{figure}
    \centering
    \includegraphics[width=\textwidth]{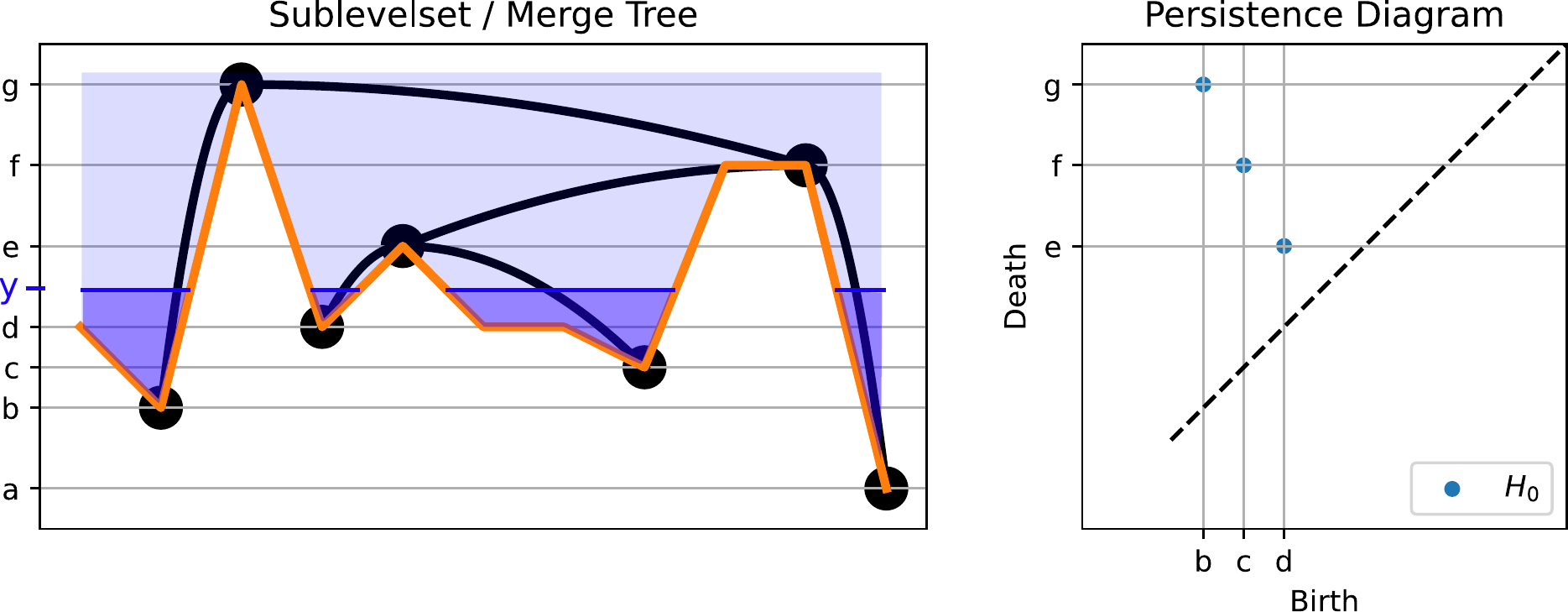}
    \caption{An example of a merge tree (drawn in black) on a piecewise linear extension of a time series (draw in orange).  The sublevelset $S_f(\infty)$ is drawn in light blue, while the sublevelset $S_f(y)$ is drawn in darker blue.  There are 4 equivalence classes in the merge tree at height $y$.  Note also that only the min on the right endpoint of the time series shows up in the merge tree; the max on the left endpoint does not lead to any births or deaths of connected components in the sublevelset.}
    \label{fig:MergeTreeBasics}
\end{figure}

To bridge the gap between the continuous formulation in Definition~\ref{def:mergetree} and our discrete notion of time series, we define a time series merge tree as that obtained on a piecewise linear extension of the time series.  Figure~\ref{fig:MergeTreeBasics} shows an example.  We also note the following:

\begin{lemma}
The critical point time series corresponds to an inorder traversal of the merge tree of a piecewise linear extension of that time series.
\end{lemma}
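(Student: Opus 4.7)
The plan is to induct on the number of critical points, exploiting the fact that on a one-dimensional interval domain the topology of sublevelsets is especially rigid. First I would establish the key structural observation: for the piecewise linear extension $f:[0,N-1]\to\mathbb{R}$ and any height $y$, the sublevelset $S_f(y)$ is a finite disjoint union of closed subintervals of $[0,N-1]$. Consequently, when two components of $S_f(y)$ merge at a local max $m$ occurring at time $t_m$ (as $y$ crosses $f(m)$), the two components that merge are precisely the adjacent subintervals lying immediately to the left and to the right of $t_m$. This gives a canonical notion of ``left child'' and ``right child'' at each internal node of the merge tree, namely the subtree whose leaves correspond to mins at times $<t_m$ and the one whose leaves correspond to mins at times $>t_m$.

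With this labeling in place, I would prove by strong induction on the number of critical points that the inorder traversal (left subtree, root, right subtree) of the merge tree agrees with $\mathbf{x}^c$. The base case is a monotone or single-min segment, whose merge tree is a single leaf and whose critical sequence is a single entry. For the inductive step, consider the subtree rooted at a max $m$ at time $t_m$; by the structural observation it decomposes into (i) $m$ itself, (ii) a subtree whose leaves are exactly the mins at times $<t_m$ merged into $m$'s component from below, and (iii) the analogous subtree on the right. The inductive hypothesis applied to each side, combined with the definition of inorder traversal, yields the critical points in their natural time order, with $m$ correctly inserted between the two sides.

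Finally I would handle the top of the tree and the endpoints. The global minimum's representative survives every merge and appears as a leaf descending from the root on whichever side of the highest max it lies, so the induction covers it with no special case. The endpoint convention in Definition~\ref{def:critpoint}, which only counts an endpoint as a (min) critical point when it is lower than its neighbor, is exactly what is needed so that endpoint extrema correspond bijectively to leaves of the merge tree, avoiding the ``phantom'' critical point noted in Figure~\ref{fig:MergeTreeBasics}. The main obstacle I anticipate is the bookkeeping needed to show that the elder rule is compatible with this geometric left/right assignment: one must verify that the subtree inherited by $m$ from either side of $t_m$ really is a well-defined merge subtree, regardless of which side contains the elder representative. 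This follows from the fact that in 1D the merging components are disjoint intervals separated by $t_m$, so their internal merge histories are independent and the elder rule only decides which mins label which edges, not the tree shape.
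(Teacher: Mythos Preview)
The paper does not actually supply a proof of this lemma; it is stated without argument and immediately used to justify that matchings of critical point sequences correspond to matchings of ordered merge trees. So there is no ``paper's own proof'' to compare against.

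Your proposal is a sound way to fill the gap. The structural observation that sublevelsets of a PL function on an interval are disjoint unions of closed subintervals is exactly the right starting point, and it immediately gives the left/right orientation at each internal node that makes ``inorder'' meaningful. Your induction is the natural one: splitting at the global max $m$ at time $t_m$ partitions the domain into $[0,t_m]$ and $[t_m,N-1]$, and because $t_m$ realizes the global maximum the merge histories on the two sides are genuinely independent below height $f(t_m)$, so the two subtrees at the root are precisely the merge trees of the two restrictions. The endpoint convention then ensures that $t_m$ itself (now an endpoint max of each restricted piece) is not double-counted as a critical point of the restrictions, which is exactly what the inductive hypothesis needs. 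Your remark about the elder rule is also on point: in 1D the elder rule only affects which min labels the surviving branch, not the combinatorial tree or its left/right structure, so it does not interfere with the inorder correspondence. In short, your argument is correct and supplies what the paper leaves implicit.
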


Therefore, {\em any matching between the critical point time series corresponds to a matching between ordered merge trees}.  Furthermore, Figure~\ref{fig:MergeTreeBasics} demonstrates the interval endpoint cases in Definition~\ref{def:critpoint}. Only endpoint mins appear in the critical point time series; no components of the sublevelset are born or die at maxes, so there are no nodes in the merge tree for them.

Finally, we introduce our first comparison between merge trees by comparing their persistence diagrams via the Bottleneck and $p$-Wasserstein distances \cite{cohen2010lipschitz}:

\begin{definition}
\label{def:wass}
\begin{equation}
d_W^p(DGM_1, DGM_2) = \inf_{\gamma \in \Gamma} \left( \sum_{i=1}^{|\gamma|} || DGM_1(i), \gamma(DGM_1(i)) ||^p \right)^{\frac{1}{p}}
\end{equation}
where $\Gamma$ is the set of all perfect bipartite matchings (1-1 correspondences between points in $DGM_1$ to those in $DGM_2$), where diagonal points $(b, b)$ are included in each with infinite multiplicity.  If $p = \infty$, this is known as the \textbf{bottleneck distance}.
\end{definition}

\section{Desired Properties}
\label{sec:desiderata}

Below are the four properties that we want to satisfy for some dissimilarity measure $\mathrm{d}(\vect{x}, \vect{y})$ between two time series $\vect{x}$ and $\vect{y}$, respectively, and perhaps also in relation to a third time series $\vect{z}$.  We draw inspiration for these properties from Morozov et. al. \cite{morozov2013interleaving}.  While they define them for merge trees on general domains, we specialize them to time series.

\begin{definition}

A \textbf{pseudometric} $d$ between time series $\mathbf{x}$ and $\mathbf{y}$ is a real-valued function satisfying:   
\begin{itemize}
    \item $\mathrm{d}(\vect{x}, \vect{x}) = 0$; that is, the distance between a time series and itself should be 0, but we also allow distance between two different time series to be 0, which is common for other dissimilarity measures in the literature (e.g. \cite{morozov2013interleaving})
    \item $\mathrm{d}(\vect{x}, \vect{y}) \geq 0$ Non-negativity
    \item $\mathrm{d}(\vect{x}, \vect{y}) = \mathrm{d}(\vect{y}, \vect{x})$ Symmetry
    \item $\mathrm{d}(\vect{x}, \vect{y}) \leq \mathrm{d}(\vect{x}, \vect{z}) + \mathrm{d}(\vect{z}, \vect{y})$ for all time series $\mathbf{z}$; Triangle inequality
\end{itemize}
\end{definition}

The bottleneck and Wasserstein metrics between persistence diagrams of sublevelset filtrations of these functions are a pseudometric, as is the interleaving distance between merge trees \cite{morozov2013interleaving}.  Satisfying this property can help accelerate search in large databases \cite{ciaccia1997m}.

Next, given critical time series $\vect{x^c}$ of length $M_c$ and $\vect{y^c}$ of length $N_c$ associated to $\vect{x}$ and $\vect{y}$, respectively, we define \textbf{discrete $p$-stability} between $\vect{x}$ and $\vect{y}$ as

\begin{definition}
    \label{def:stable}
    A metric $\mathrm{d}(\vect{x}, \vect{y})$ is \textbf{discrete $p$-stable} if 
    
    \begin{equation}
        \mathrm{d}(\vect{x}, \vect{y}) \leq  \left( \sum_{i = 1}^{\max{M_c, N_c}} |\vect{x}^c_i - \vect{y}^c_i| ^p \right) ^{1/p}, p \in \mathbb{Z}^+
    \end{equation}
    
\end{definition}

where we zeropad so that $\vect{x}_i^c = 0, i > M_c, \vect{y}_j^c = 0, j > N_c$.


Beyond metric properties and stability, we want our metrics to be strong enough so that they are at least as strong as the $p$-Wasserstein distance between persistence diagrams of sublevelset filtrations of $\vect{x}$ and $\vect{y}$.  This is inherently at odds with stability.  More formally,

\begin{definition}
    A metric $\mathrm{d}(\vect{x}, \vect{y})$ is \textbf{$p$-informative} if 
    \begin{equation}
        \mathrm{d}^p_{\mathcal{W}}(\mathrm{DGM}(\vect{x}), \mathrm{DGM}(\vect{y})) \leq \mathrm{d}(\vect{x}, \vect{y})
    \end{equation}
\end{definition}
where $\mathrm{DGM}(\vect{v})$ refers to the persistence diagram of the sublevelset filtrations on a time series $\vect{v}$.  For instance, if $p = \infty$, then we require our metric to be at least as strong as the bottleneck distance between diagrams.  Note also that if $p' < p$, then $p'$-informativity is a stronger condition than $p$-informativity for persistence diagrams of functions on cell complexes, of which (circular) time series are a special case \cite{skraba2020wasserstein}.

Finally, we want there to exist an algorithm to \textbf{efficiently compute} the metric $\mathrm{d}(\vect{x}, \vect{y})$ in asymptotic polynomial time as a function of the lengths of $\vect{x}$ and $\vect{y}$.  This property has been surprisingly elusive for merge tree metrics that satisfy the first three properties.  For instance, the authors of \cite{agarwal_computing_2018} show that the interleaving distance, which satisfies the first three properties for $p=\infty$, is NP-hard to approximate within a constant factor less than $3$.  Our goal is to exploit the additional ordered structure to restrict possibilities.

\section{Related Work}
\label{sec:background}

We now review some related work for comparing time series and merge trees.  We focus primarily on parameter free, unsupervised techniques, many of which work beyond ordered domains.  Interestingly, though, each technique fails at least one of our desired properties.

\subsection{Merge Tree Comparison Techniques}
\label{sec:bgmergetree}

\begin{figure}
    \centering
    \includegraphics[width=\textwidth]{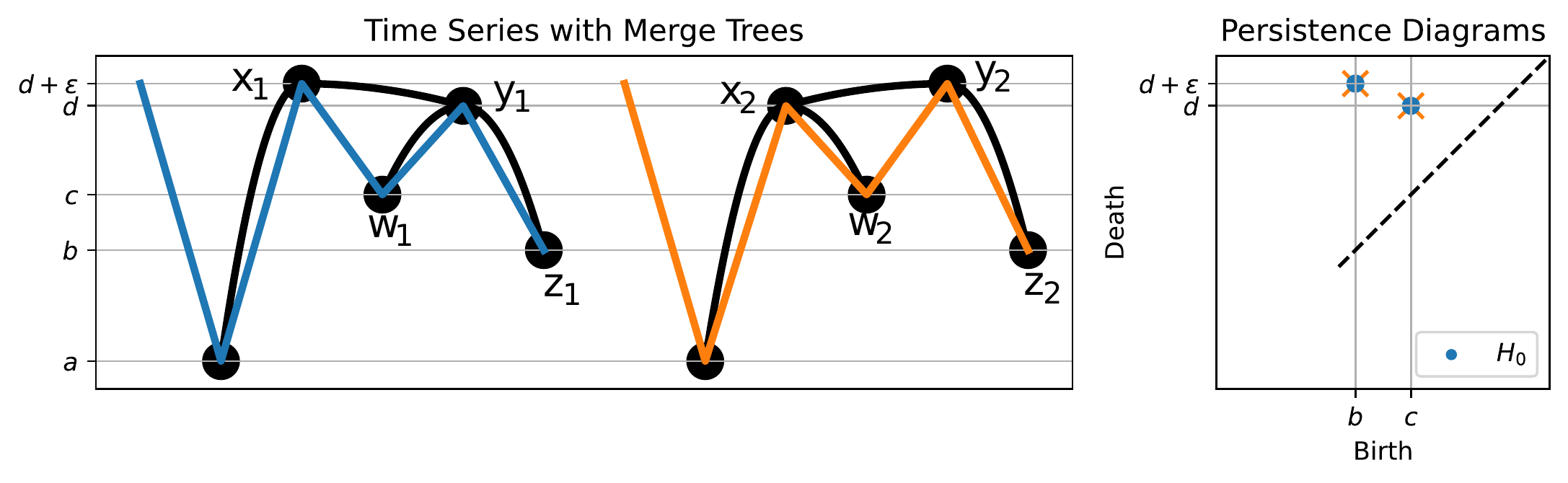}
    \caption{These two time series only differ at the maxes $x$ and $y$ by $\epsilon$, and their DOPE distance is appropriately $2\epsilon$, but their merge trees do not preserve ancestral relationships, so the best map that merge tree edit distance \cite{sridharamurthy_edit_2020} has to delete the mins $w_1$ and $w_2$ before matching the min $z_1$ to $z_2$, violating stability.  By contrast, their persistence diagrams are identical, so any metrics based on persistence diagrams fail to see the difference between them for any $0 \leq \epsilon \leq d-c$.}
    \label{fig:BlindRotation}
\end{figure}

\begin{figure}
\centering
\includegraphics[width=\textwidth]{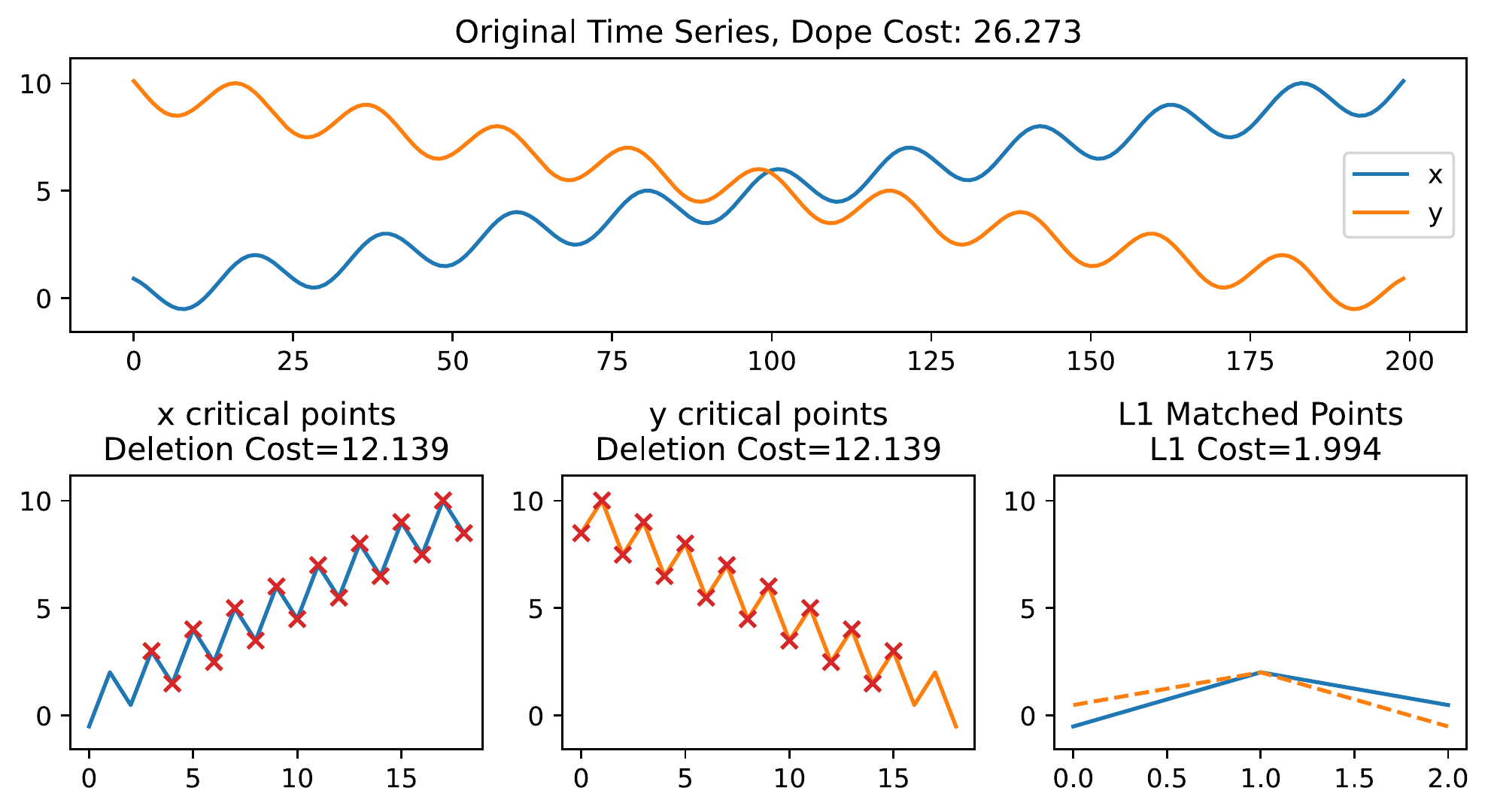}
\caption{These two time series differ by a reflection, and their persistence diagrams are identical.  On the other hand, DOPE must match critical points in {\em time-ordered sequence}, so it is forced to delete many pairs before matching, and it can hence tell the difference between these two time series.}
\label{fig:reflect}
\end{figure}

Recent work defining a universal metric (i.e. maximally informative while remaining stable) for merge tree comparison \cite{cardona_universal_2022} seems to imply that it will be difficult to find an algorithm to efficiently compute such a metric.  Nevertheless, many works have found compromises.  Weaker distances use statistics on top of persistence diagrams (e.g. \cite{rouse2015feature,khasawneh2018topological,dindin2020topological,myers2020automatic}), which can confuse very different time series (as seen in Figure~\ref{fig:BlindRotation} and Figure~\ref{fig:reflect}), but which are stable and polynomial time computable.



One stronger family of techniques to compare merge trees is based on general node-based edit distance algorithms for trees.  If the costs for add/delete/match satisfy the triangle inequality, then such edit distances will be metrics \cite{zhang_simple_1989}.  However, some desirable edit distances are hard to compute; for instance, Zhang shows \cite{zhang1992editing} that ancestor-preserving maps for unordered trees are NP complete even for an alphabet of size 2.  In spite of this, researchers have shown some empirical success with edit-based operations.  Sridharamurthy et. al. \cite{sridharamurthy_edit_2020} plug in matching costs that are based on metrics between birth/death pairs, such as the $L_{\infty}$ distance.  They then use a suboptimal but polynomial time computable constrained version of the ancestral mapping edit distance \cite{zhang1996constrained} which maps disjoint subtrees to disjoint subtrees.  However, even with a map that preserves ancestral orders, stability is lost with tree rotations as maxes move past each other, as shown in Figure~\ref{fig:BlindRotation}.  The authors deal with this instability in practice by collapsing saddle points below some threshold $\epsilon$ to a degenerate saddle; in this example, this amounts to deleting nodes $y_1$ and $y_2$ before matching.  Still, the authors show good performance on 2D domains, and it is possible to improve performance by extending the costs to incorporate branch decompositions \cite{bremer_measuring_2014} of merge trees \cite{pont_wasserstein_2021, WetzelsLG22}.

Beyond edit distance, another class of approaches to building metrics between merge trees are “functional approaches.” These define a distance in terms of functions which act on the space of trees. One prominent such distance is the interleaving distance, which roughly finds the smallest vertical shift necessary to map all points on each tree upwards into the other tree. Originally defined in \cite{morozov2013interleaving} by  Morozov et al. in terms of two maps (one in each direction between two trees), the interleaving distance is generally hard to compute \cite{agarwal_computing_2018}. However, while Morozov et al.'s original presentation mentions only an exponential-time computational algorithm, later work lessens the computational burden to an $O(n^2\log(n))$ fixed-parameter tractable problem. This is achieved using a dynamic programming algorithm based on the fact that the interleaving distance can be defined equivalently by mapping only in one direction instead of finding two maps \cite{farahbakhsh2019fpt}.

Finally, there is a technique that uses integer linear programming to devise a merge tree metric \cite{pegoraro_metric_2022}.  While integer linear programming is NP hard, this is one of the few papers that implements working code to compare merge trees.  It is also worth noting that if we constrain ourselves to phylogenetic trees in the leaves are labeled, it is possible to devise stable, informative, and efficiently computable metrics \cite{gasparovic38intrinsic}.  Unfortunately, in time series, we do not have specific labels on the mins, so these techniques do not directly apply.

\subsection{Time Warping Time Series Comparison Techniques}

Dynamic Time Warping (DTW) is a classical algorithm to compare time series that are samples of re-parameterized version of the same function by finding a time-ordered correspondence known as a {\em warping path} (Definition~\ref{def:warppath}) which minimizes the sum of dissimilarities between corresponding points.  It was first developed in the context of audio recognition and alignment \cite{sakoe1970similarity, sakoe1978dynamic}, but it has seen wide applications in many time series tasks since.  DTW is efficiently computable for time series in general spaces in $O(N^2)$ time, and it is computable $O(N^2 \log \log \log(N) / \log \log(N)$ for $N$ as $O(M)$ for 1D time series \cite{gold2018dynamic}.  The authors of \cite{chung2020persistent} also claim that it is $L_{\infty}$ informative, but it is neither stable nor a metric (See Appendix~\ref{sec:appendixdtwpropeties}).  One way to turn DTW into a metric, while making it less informative, is to merely minimize the maximum dissimilarity over all possible warping paths, rather than the sum of all dissimilarities.  This is known as the Discrete Fréchet Distance \cite{eiter1994computing}, which can be computed in subquadratic $O(mn \log \log n / \log n)$ time \cite{agarwal2014computing}.

\section{Dynamic Ordered Persistence Editing (DOPE)}

\begin{figure}
\centering
\includegraphics[width=\textwidth]{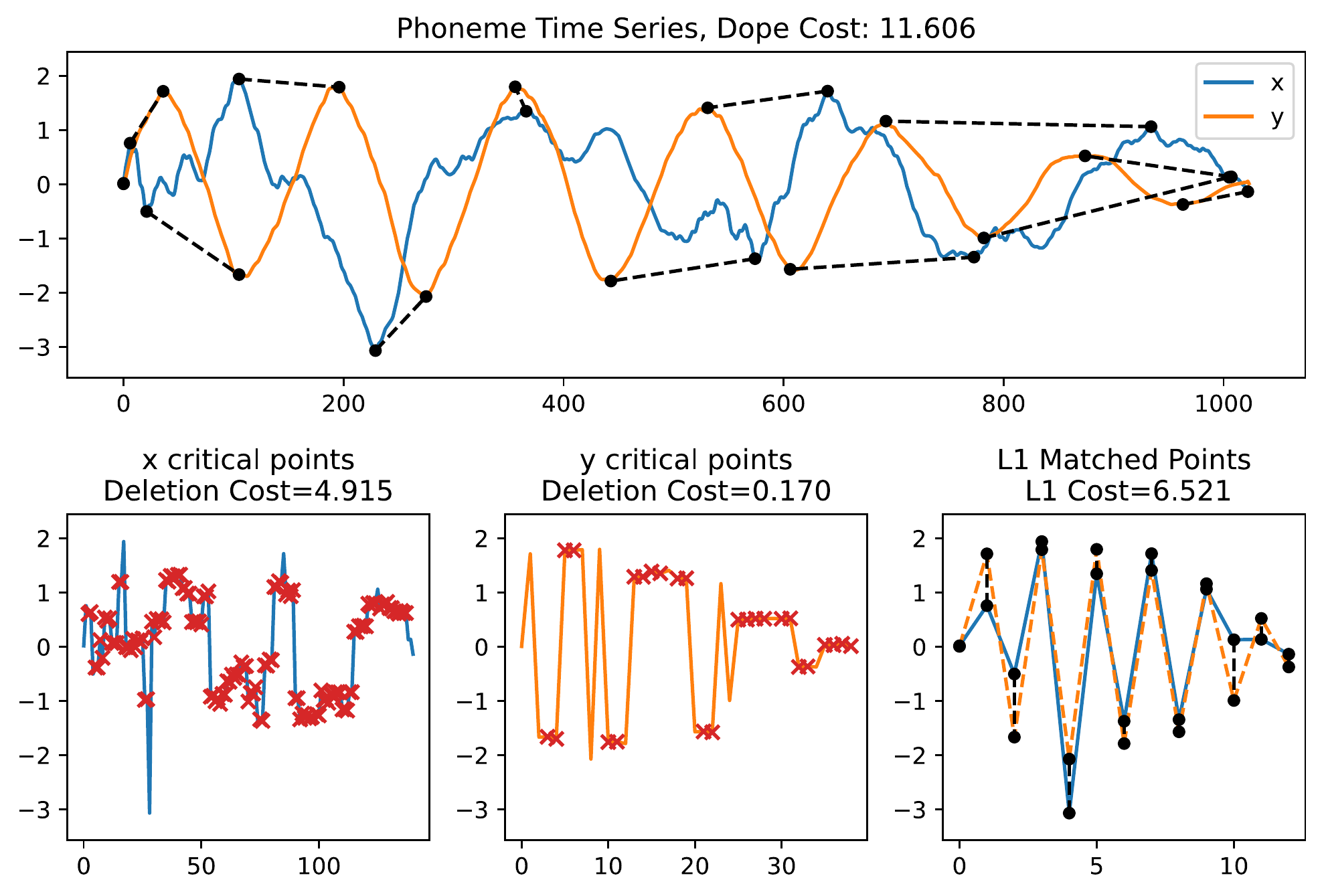}
\caption{An example of an optimal DOPE matching between the audio samples of two different people saying the phoneme "EH", as obtained from the  UCR time series database \cite{dau2019ucr}.}
\label{fig:DOPExample}
\end{figure}

We now propose our new dissimilaity measure: Dynamic Ordered Persistence Editing (DOPE). Before defining DOPE, we provide a few preliminary definitions. In what follows, all time series are assumed to be defined on the same domain.

\begin{definition}
    Given a time series $\vect{x}$, a \textbf{\textit{min-max pair}} is an ordered pair $(\vect{x}^c_k, \vect{x}^c_{k+1})$ of critical points in $\vect{x}^c$.
\end{definition}

\begin{definition} Let $\vect{x}, \vect{y}$ be time series.  An \textbf{\textit{alignment}} $A=(M, R_{\vect{x}^c}, R_{\vect{y}^c})$ of $\vect{x}$ and $\vect{y}$ is a triple of sets $M\subseteq \vect{x}^c\times \vect{y}^c, R_{\vect{x}^c}\subseteq \vect{x}^c\times \vect{x}^c, R_{\vect{y}^c}\subseteq \vect{y}^c\times \vect{y}^c$ satisfying the following properties:
\begin{itemize}
    \item $M$ contains matched pairs from $\vect{x}^c\times \vect{y}^c$ with the property that any two matchings $(\vect{x}^c_{i_1},\vect{y}^c_{j_1}), (\vect{x}^c_{i_2}, \vect{y}^c_{j_2}) \in M$ satisfying $i_1 < i_2$ also satisfy $j_1 < j_2$. 
    \item Ordered pairs in $R_{\vect{x}^c}$ and $R_{\vect{y}^c}$ are min-max pairs.
    \item Each element of $\vect{x}^c$ (respectively, $\vect{y}^c$) appears in exactly one ordered pair in $M$ or $R_{\vect{x}^c}\ (R_{\vect{y}^c})$, but not both.
    \item $\left|M\right| = \left|\vect{x}^c\setminus \{\vect{x}^c_k,\vect{x}^c_{k+1}: (\vect{x}^c_k,\vect{x}^c_{k+1})\in R_{\vect{x}^c}\}\right|=\left|\vect{y}^c\setminus \{\vect{y}^c_j,\vect{y}^c_{j+1}: (\vect{y}^c_j,\vect{y}^c_{j+1})\in R_{\vect{y}^c}\}\right|.$
\end{itemize}
\end{definition}

Ordered pairs in $M$ are called \textit{matched pairs} or \textit{matchings}. We call any ordered pair in $R_{\vect{x}^c}$ or $ R_{\vect{y}^c}$ a \textit{removed} or \textit{deleted pair}, and to include a min-max pair in $R_{\vect{x}^c}$ or $R_{\vect{y}^c}$ is to \textit{perform a removal} or \textit{deletion.} We now associate a cost with an alignment:
\begin{definition}
The \textbf{\textit{alignment cost}} of alignment $A=(M,R_{\vect{x}^c}, R_{\vect{y}^c})$ between time series $\vect{x}$ and $\vect{y}$ is:\\
\[C(A) := \displaystyle\sum_{(\vect{x}^c_i, \vect{y}^c_j) \in M} |\vect{x}^c_i - \vect{y}^c_j| + \displaystyle\sum_{(\vect{x}^c_i, \vect{x}^c_{i+1}) \in R_{\vect{x}^c}} |\vect{x}^c_i - \vect{x}^c_{i+1}| + \displaystyle\sum_{(\vect{y}^c_j, \vect{y}^c_{j+1}) \in R_{\vect{y}^c}} |\vect{y}^c_j - \vect{y}^c_{j+1}|\]
\end{definition}
We are now ready to define our main object of interest.

\begin{definition}
\label{def:dope}
We define the \textbf{\textit{DOPE (Dynamic Ordered Persistence Edit) distance}} between two time series $\vect{x}$ and $\vect{y}$ to be: \[\dope{\vect{x}, \vect{y}} := \min \{C(A) \mid A \text{ is an alignment of } \vect{x} \text{ and } \vect{y}\}\]
\end{definition}

\subsection{Metric Properties and Correspondence to an Edit Distance}
We now recast an alignment $A=(M,R_{\vect{x}^c},R_{\vect{y}^c})$ as a sequence of edit operations on the time series $\vect{x}^c$. We take inspiration from the edit operations of Zhang and Sasha \cite{zhang_simple_1989}. An \textit{edit operation from source time series $\vect{x}^c$ to target time series $\vect{y}^c$} is any one of the following: (1)  an element of $R_{\vect{y}^c}$ is an \textit{insertion}, (2) an element of $R_{\vect{x}^c}$ is a \textit{deletion}, (3) an element of $M$ is a \textit{matching}. We think of an insertion as taking a min-max pair from $\vect{y}^c$ and inserting it into $\vect{x}^c$ and of a deletion as removing a min-max pair from $\vect{x}^c$. The set of all admissible edit operations from source time series $\vect{x}^c$ to target time series $\vect{y}^c$ over all posible alignments $A$ is denoted $E_{\vect{x}^c\vect{y}^c}$.  We now define a cost function to keep track of the cost of such edits. 
\begin{definition}
    Let $\gamma \colon E_{\vect{x}^c\vect{y}^c} \to \mathbb{R}$ be the \textbf{cost function} associated with the set all of possible edit operations on $\vect{x}^c$ to $\vect{y}^c$, defined as follows:
    \begin{align*}
        &\text{One-Point Matching cost: } & \gamma( \vect{x}^c_i \rightarrow \vect{y}^c_j ) &= |\vect{x}^c_i-\vect{y}^c_j|  \\\\
        &\text{Pairwise Matching cost: } & \gamma( (\vect{x}^c_i,\vect{x}^c_{i+1}) \rightarrow (\vect{y}^c_j, \vect{y}^c_{j+1})) &= \min{\begin{cases}
        |\vect{x}^c_i-\vect{y}^c_j| + |\vect{x}^c_{i+1}-\vect{y}^c_{j+1}|, \\
        |\vect{x}^c_i-\vect{x}^c_{i+1}| + |\vect{y}^c_j-\vect{y}^c_{j+1}| \\
        \end{cases}}  \\\\
        &\text{Pairwise Deletion cost: } & \gamma( (\vect{x}^c_i,\vect{x}^c_{i+1}) \rightarrow \Lambda) &= |\vect{x}^c_i-\vect{x}^c_{i+1}|\\\\
        &\text{Pairwise Insertion cost: } & \gamma( \Lambda \rightarrow (\vect{x}^c_i,\vect{x}^c_{i+1})) &= |\vect{x}^c_i-\vect{x}^c_{i+1}| \\
    \end{align*}
    where $(\vect{x}^c_i,\vect{x}^c_{i+1}) \rightarrow \Lambda$ denotes deleting the pair $(\vect{x}^c_i,\vect{x}^c_{i+1})$ from time series $\vect{x}^c$ and $\Lambda \rightarrow (\vect{x}^c_i,\vect{x}^c_{i+1})$ denotes inserting the pair $(\vect{x}^c_i,\vect{x}^c_{i+1})$ into time series $\vect{x}^c$. When inserting the pair $(\vect{x}^c_i,\vect{x}^c_{i+1})$, the index $i$ denotes the insertion index, and all indices to the right are re-ordered. 
\end{definition}
In the case of adjacent matchings, $m_1 = (\vect{x}^c_{i_1},\vect{y}^c_{j_1}), m_2 =(\vect{x}^c_{i_2},\vect{y}^c_{j_2})$ such that $i_2 = i_1 + 1, j_2 = j_1 + 1$, we \textbf{require} the use of the pairwise matching cost.
\begin{lemma}\label{lem: triangle edit operations} \textbf{(Triangle Inequality on the Edit Operations)} Let $p \rightarrow q$, $q \rightarrow r$, and $p \rightarrow r$ denote any edits between $\vect{x}^c$ and $\vect{y}^c$, $\vect{y}^c$ and $\vect{z}^c$, and $\vect{x}^c$ and $\vect{z}^c$, respectively. Then $$\mathrm{\gamma}(p \rightarrow r) \leq \gamma (p \rightarrow q) +\gamma (q\rightarrow r).$$
\end{lemma}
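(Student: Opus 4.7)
The plan is to prove the inequality by exhaustive case analysis on the types of $p$, $q$, and $r$. Each of $p$, $q$, $r$ is either a single critical point, a min-max pair, or the empty symbol $\Lambda$, and the definition of $\gamma$ forces compatibility constraints: a single point only appears in a one-point matching, so if any of $p, q, r$ is a single point then all three must be; $\Lambda$ only appears opposite a pair. This leaves four substantive configurations: (1) all three single points, (2) all three pairs, (3) exactly one of $p$ or $r$ equals $\Lambda$ with the other two being pairs, and (4) $q = \Lambda$ with $p$ and $r$ both pairs.

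Case (1) is the triangle inequality in $\mathbb{R}$ applied to $|\vect{x}^c_i - \vect{z}^c_k| \leq |\vect{x}^c_i - \vect{y}^c_j| + |\vect{y}^c_j - \vect{z}^c_k|$. Case (4) follows because $\gamma(p \to \Lambda) + \gamma(\Lambda \to r) = |\vect{x}^c_i - \vect{x}^c_{i+1}| + |\vect{z}^c_k - \vect{z}^c_{k+1}|$ is exactly the second branch of the min defining the pairwise match cost $\gamma(p \to r)$, so the inequality is immediate. Case (3) splits into two symmetric subcases; when, say, $r = \Lambda$, one has $\gamma(p \to r) = |\vect{x}^c_i - \vect{x}^c_{i+1}|$, and whichever branch realizes the match cost $\gamma(p \to q)$, combining its terms with $|\vect{y}^c_j - \vect{y}^c_{j+1}|$ bounds $|\vect{x}^c_i - \vect{x}^c_{i+1}|$ by an iterated triangle inequality on the path $\vect{x}^c_i \to \vect{y}^c_j \to \vect{y}^c_{j+1} \to \vect{x}^c_{i+1}$.

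Case (2) is the heart of the argument. The RHS $\gamma(p \to q) + \gamma(q \to r)$ splits into four subcases according to which branch of the min is selected in each pairwise match cost. In the both-cross subcase, a regrouping of four terms gives $(|\vect{x}^c_i - \vect{y}^c_j| + |\vect{y}^c_j - \vect{z}^c_k|) + (|\vect{x}^c_{i+1} - \vect{y}^c_{j+1}| + |\vect{y}^c_{j+1} - \vect{z}^c_{k+1}|) \geq |\vect{x}^c_i - \vect{z}^c_k| + |\vect{x}^c_{i+1} - \vect{z}^c_{k+1}|$, bounding the first branch of $\gamma(p \to r)$. In the both-deletion/insertion subcase the sum already exceeds $|\vect{x}^c_i - \vect{x}^c_{i+1}| + |\vect{z}^c_k - \vect{z}^c_{k+1}|$, bounding the second branch. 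In each mixed subcase, the iterated triangle inequality $|\vect{x}^c_i - \vect{x}^c_{i+1}| \leq |\vect{x}^c_i - \vect{y}^c_j| + |\vect{y}^c_j - \vect{y}^c_{j+1}| + |\vect{y}^c_{j+1} - \vect{x}^c_{i+1}|$ (or its $z$-sided analogue) converts the cross terms into a bound on $|\vect{x}^c_i - \vect{x}^c_{i+1}|$ or $|\vect{z}^c_k - \vect{z}^c_{k+1}|$, again dominating the second branch of $\gamma(p \to r)$.

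The main obstacle is simply organizing the case analysis cleanly and checking that in every configuration at least one branch of the min on the LHS is dominated by the RHS; nothing deeper than the real-valued triangle inequality and the monotonicity $\min(a,b) \leq a, b$ is needed. A secondary subtlety is making sure the cases with $\Lambda$ cannot collapse further (for instance, $p = q = \Lambda$ is not a legal pair of edits), so that each configuration we consider really does correspond to admissible elements of $E_{\vect{x}^c \vect{y}^c}$, $E_{\vect{y}^c \vect{z}^c}$, and $E_{\vect{x}^c \vect{z}^c}$ simultaneously.
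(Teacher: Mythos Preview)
Your proposal is correct and follows essentially the same approach as the paper: an exhaustive case analysis on the types of $p,q,r$, reducing every case to the triangle inequality on $\mathbb{R}$, with the all-pairs case split into four subcases according to which branch of the $\min$ realizes each pairwise matching cost. Your organization is slightly cleaner than the paper's (you correctly prune configurations where some $p\to q$, $q\to r$, or $p\to r$ would be $\Lambda\to\Lambda$, hence not an admissible edit, whereas the paper's diagram nominally lists eight paths), but the substance is identical.
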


\begin{proof}
    The triangle inequality on individual edits is proved by considering all possible combinations of edit operations. The following diagram illustrates eight possibilities (excluding the sequence of only deletions) by considering all paths of arrows from $\vect{x}^c$ into $\vect{z}^c$.
    $$
    \xymatrix{
    \vect{x}_i^c \ar[r] & \vect{y}_j^c  \ar[r] & \vect{z}_k^c\\
    (\vect{x}_i^c, \vect{x}_{i+1}^c)  \ar[r] \ar[rd] & (\vect{y}_j^c, \vect{y}_{j+1}^c) \ar[r] \ar[rd] & (\vect{z}_k^c, \vect{z}_{k+1}^c)\\
    \Lambda \ar[ru] \ar[r] & \Lambda \ar[ru] \ar[r] & \Lambda
    }
    $$
    Proofs for each path follow from the triangle inequality on $\bb{R}$ and are given in Appendix~\ref{sec:dopetriangleappendix}.\end{proof}

We now use Lemma \ref{lem: triangle edit operations} to prove a triangle inequality for a general sequence of edits, which will be used to prove metric properties for \textit{DOPE}. Define the cost of a sequence of edit operations as the sum of costs of the individual edits. Care must be taken as the composition of valid \textit{DOPE} edit operations need not result in a valid sequence of \textit{DOPE} edits. 

\begin{lemma}\label{compositionlemma}
    Let $\vect{x}, \vect{y}, \vect{z}$ be three time series. Given sequences of edits $S_{\vect{x^c}\vect{y}^c}$ and $S_{\vect{y}^c\vect{z}^c}$ converting $\vect{x}^c$ into $\vect{y}^c$ and $\vect{y}^c$ into $\vect{z}^c$, respectively, there exists a sequence of edits $S_{\vect{x}^c\vect{z}^c}$ converting $\vect{x}^c$ into $\vect{z}^c$ such that $$\gamma(S_{\vect{x}^c\vect{z}^c}) \leq \gamma(S_{\vect{x}^c\vect{y}^c}) + \gamma(S_{\vect{y}^c\vect{z}^c}).$$
\end{lemma}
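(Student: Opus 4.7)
The plan is to build $S_{\vect{x}^c\vect{z}^c}$ by tracing each element of $\vect{y}^c$ through both given sequences. Every $y^c_j$ has a well-defined ``source type'' in $S_{\vect{x}^c\vect{y}^c}$ (type M if it is the target of a one-point matching from some $x^c_{\sigma(j)}$, type I if it is part of an inserted pair) and a well-defined ``target type'' in $S_{\vect{y}^c\vect{z}^c}$ (type M if it is the source of a one-point matching to some $z^c_{\tau(j)}$, type D if it is part of a deleted pair). I would define the composed edit for each $y^c_j$ by the type-pair: M--M yields a matching $x^c_{\sigma(j)} \to z^c_{\tau(j)}$; M--D yields a contribution $x^c_{\sigma(j)} \to \Lambda$ toward a deletion from $\vect{x}^c$; I--M yields a contribution $\Lambda \to z^c_{\tau(j)}$ toward an insertion into $\vect{x}^c$; and I--D is a no-op, since such a $y^c_j$ exists in neither $\vect{x}^c$ nor $\vect{z}^c$.

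Assuming this collection of operations forms a valid alignment, the cost bound follows directly from Lemma~\ref{lem: triangle edit operations}. Each atomic edit in $S_{\vect{x}^c\vect{z}^c}$ arises from exactly one edit in each of $S_{\vect{x}^c\vect{y}^c}$ and $S_{\vect{y}^c\vect{z}^c}$ (or from operations whose combined cost on the $\vect{y}^c$ side already covers it, as in the I--D case), and summing the per-edit triangle bounds of Lemma~\ref{lem: triangle edit operations} over all $y^c_j$ gives the required inequality $\gamma(S_{\vect{x}^c\vect{z}^c}) \le \gamma(S_{\vect{x}^c\vect{y}^c}) + \gamma(S_{\vect{y}^c\vect{z}^c})$. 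Adjacent M--M points that must be charged under the pairwise matching cost are handled by choosing the ``delete-then-reinsert'' branch of the $\min$ in the pairwise matching rule, which is bounded by the sum of the corresponding two pairwise costs already paid in $S_{\vect{x}^c\vect{y}^c}$ and $S_{\vect{y}^c\vect{z}^c}$.

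The main obstacle, and where the work lies, is verifying that the constructed operations actually form a valid alignment, i.e., that every deletion and insertion genuinely occurs on a min-max adjacent pair. For a maximal run of M--D labels $y^c_j,\dots,y^c_{j+\ell}$ whose sources $x^c_{\sigma(j)},\dots,x^c_{\sigma(j+\ell)}$ need not be consecutive in $\vect{x}^c$, I would argue that any gap between consecutive sources is filled by elements of $\vect{x}^c$ already deleted in $S_{\vect{x}^c\vect{y}^c}$; those deletions can be copied verbatim into $S_{\vect{x}^c\vect{z}^c}$, at no additional cost beyond what $S_{\vect{x}^c\vect{y}^c}$ already pays. A symmetric argument using insertions from $S_{\vect{y}^c\vect{z}^c}$ handles I--M runs. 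Finally, because critical points in $\vect{x}^c$, $\vect{y}^c$, and $\vect{z}^c$ all strictly alternate between mins and maxes, and because both $S_{\vect{x}^c\vect{y}^c}$ and $S_{\vect{y}^c\vect{z}^c}$ respect this alternation, the parity of each maximal M--D or I--M block in $\vect{y}^c$ forces the corresponding segment in $\vect{x}^c$ or $\vect{z}^c$ to have an even number of critical points grouped into valid min-max pairs, closing the validity argument.
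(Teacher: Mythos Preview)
Your per-element classification of $\vect{y}^c$ into types M--M, M--D, I--M, I--D is the same starting point the paper uses; the paper calls the resulting straightforward compositions the ``non-overlapping'' cases. The genuine gap is in your closing parity argument: a maximal M--D (or I--M) run in $\vect{y}^c$ need \emph{not} correspond to an even-length segment of $\vect{x}^c$ (or $\vect{z}^c$), because M--D and I--M labels can interleave between two consecutive M--M anchors. Concretely, take $\vect{x}^c=(x_1,x_2,x_3)$ and $\vect{z}^c=(z_1,z_2,z_3)$ (each of pattern min, max, min) and $\vect{y}^c=(y_1,\dots,y_5)$ (min, max, min, max, min). Let $S_{\vect{x}^c\vect{y}^c}$ match $x_1\to y_1$, $x_2\to y_2$, $x_3\to y_5$ and insert $(y_3,y_4)$; let $S_{\vect{y}^c\vect{z}^c}$ match $y_1\to z_1$, $y_4\to z_2$, $y_5\to z_3$ and delete $(y_2,y_3)$. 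Both are valid alignments. Your labels are then M--M, M--D, I--D, I--M, M--M in order, and your rule outputs: match $x_1\to z_1$ and $x_3\to z_3$, delete the lone element $x_2$, and insert the lone element $z_2$. Neither of the latter can be completed to an adjacent min-max pair, so the output is not a valid DOPE edit sequence, and your parity claim is false here (the maximal M--D block has length one).

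The missing ingredient is exactly what the paper treats as the ``overlapping'' case: when an inserted pair from $S_{\vect{x}^c\vect{y}^c}$ and a deleted pair from $S_{\vect{y}^c\vect{z}^c}$ share some but not all of their $\vect{y}^c$ elements, the orphaned M--D and I--M on either side of the chain must be fused rather than handled independently. In the example above (an ``even'' overlap in the paper's terminology, since the chain consists of one insertion and one deletion) the correct resolution is a single one-point matching $x_2\to z_2$, whose cost is bounded by $|x_2-y_2|+|y_2-y_3|+|y_3-y_4|+|y_4-z_2|$ via the triangle inequality on $\mathbb{R}$; when the overlap chain has odd length the resolution is instead a single pair deletion from $\vect{x}^c$ or insertion into $\vect{z}^c$. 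Your four-way case split is sound only after these overlapping chains have first been identified and collapsed in this way.
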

\begin{proof}
    Suppose  $S_{\vect{x}^c\vect{y}^c} = s_1, \dots, s_n$ and $S_{\vect{y}^c\vect{z}^c} = s_{n+1},  \dots, s_m$ are sequences of edits converting time series $\vect{x}^c$ into $\vect{y}^c$ and $\vect{y}^c$ into $\vect{z}^c$. We compose these sequences of edits by performing the edits in succession: $S_{\vect{y}^c\vect{z}^c} \circ S_{\vect{x}^c\vect{y}^c} := s_1, \dots, s_n, s_{n+1},  \dots, s_m.$ One can track the fate of any given point in $\vect{x}^c$ through this sequence by composing those edits that involve the point. There are ten possible compositions to consider, summarized in Figure 6, falling into two categories: overlapping and non-overlapping. The eight non-overlapping edits are precisely those paths considered in the preceding lemma. Two representative overlapping edits are shown; note that interchanging $\vect{x}^c$ and $\vect{z}^c$ and reversing the directions of arrows in the schematic gives different but equivalent representatives.\\
    \begin{figure}
    \centering
    \includegraphics[width=\textwidth]{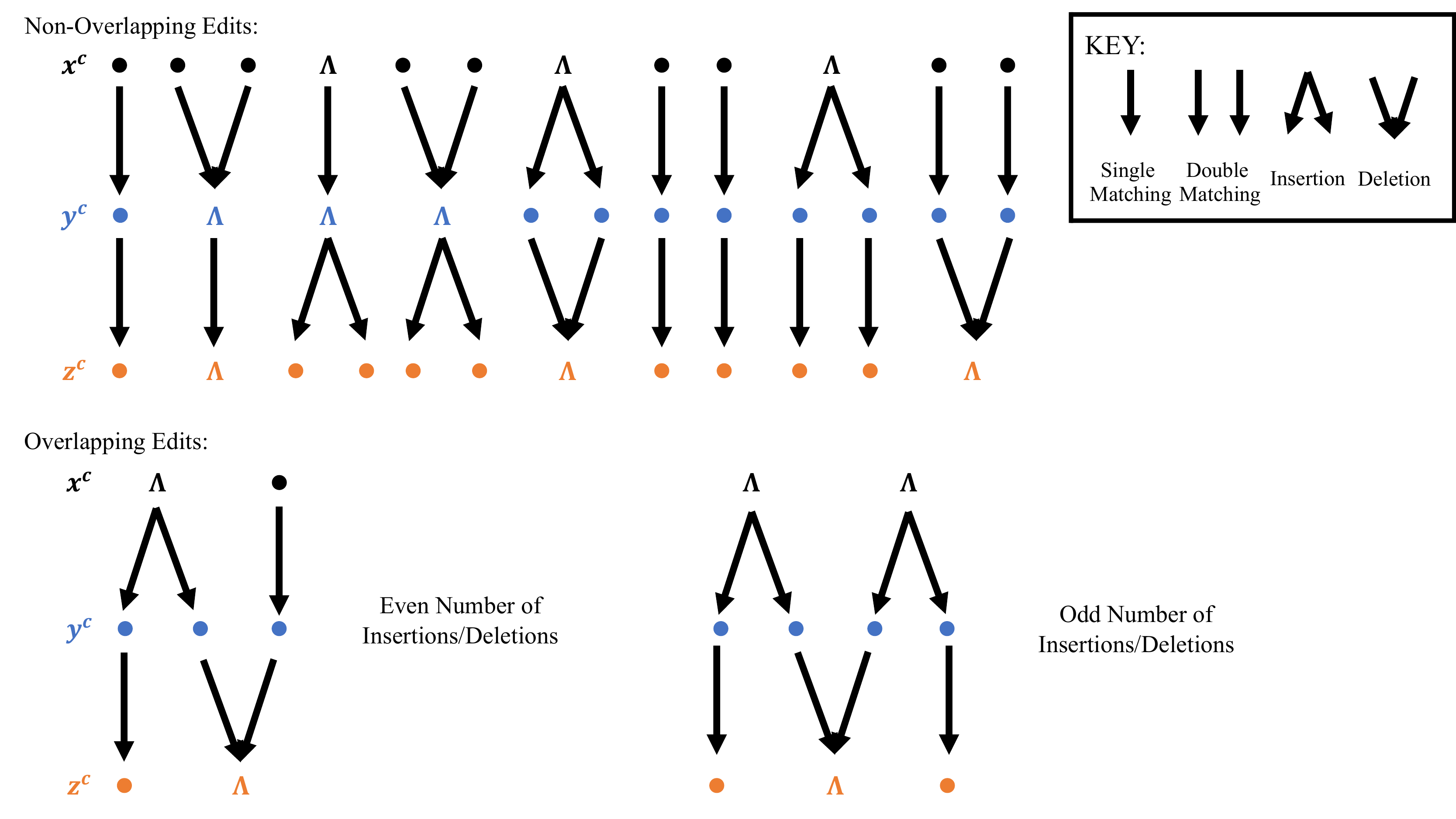}
    \caption{The possible compositions of edits to a point or min-max pair in source time series $\vect{x}^c$.}
    \label{fig:TrInCases}
    \end{figure}
    \textbf{Case 1:} All edits in $S_{\vect{y}^c\vect{z}^c} \circ S_{\vect{x}^c\vect{y}^c}$ are non-overlapping. Absent overlap, edits are disjoint, i.e., different min-max pairs are edited independently. This gives rise to a natural composition of sequential edits. For instance, two successive matchings (of either type) compose to a single matching (of that type), a pairwise matching followed by a deletion composes to a single deletion, etc. Thus $S_{\vect{x}^c\vect{z}^c} = S_{\vect{y}^c\vect{z}^c} \circ S_{\vect{x}^c\vect{y}^c}$ immediately forms a sequence of DOPE edits converting $\vect{x}^c$ into $\vect{z}^c$: $S_{\vect{x}^c\vect{z}^c} = S_{\vect{y}^c\vect{z}^c} \circ S_{\vect{x}^c\vect{y}^c} = \{s_1, \dots, s_n, s_{n+1},  \dots, s_{n+n}\} = \{ \Tilde{s_1}, \dots, \Tilde{s_n} \}$, where $\Tilde{s_i}$ is formed from the composition of exactly two edits $s_j, s_k$ with $1 \leq j,k \leq m$.\\
    \textbf{Case 2:} $S_{\vect{y}^c\vect{z}^c} \circ S_{\vect{x}^c\vect{y}^c}$ contains overlapping edits. Since edits are no longer necessarily disjoint, it is not possible to form a sequence of DOPE edits converting $\vect{x}^c$ into $\vect{z}^c$ simply by composing them $\{s_1, \dots, s_n, s_{n+1},  \dots, s_m\}$ individually. However, as summarized in Figure 6, all possible instances of overlapping edits contain either an even or odd number of insertions into and deletions from $\vect{y}^c$. In both cases we can produce a single edit from $\vect{x}^c$ to $\vect{z}^c$ that resolves the overlapping region as follows. Notice that in both cases there exist exactly two one-point matchings. In the even case, one of these matchings takes point $x$ in $\vect{x}^c$ to point $y_j$ in $\vect{y}^c$ while the other takes a different point $y_k$ in $\vect{y}^c$ to point $z$ in $\vect{z}^c$. Resolve this overlap by matching $x$ with $z$. In the odd case, the one-point matchings are either both from $\vect{x}^c$ to $\vect{y}^c$, i.e., $x_i \rightarrow y_j$, $x_{i+1} \rightarrow y_k$, $k>j$, in which case we form a deletion of min-max pair $(x_i,x_{i+1})$ from $\vect{x}^c$, or they are both from $\vect{y}^c$ to $\vect{z}^c$, i.e., $y_j \rightarrow z_l$, $y_k \rightarrow z_{l+1}$, $k>j$, in which case we form an insertion of min-max pair $(z_l,z_{l+1})$ into $\vect{z}^c$. The cost of these edits then lower bounds that of the overlapping edits; this follows from the triangle inequality on $\bb{R}$ and is shown explicitly in Appendix~\ref{sec:dopetriangleappendix}. 
    If it happens that two regions of even overlap are adjacent in the given sequences of edits, such that these resolutions produce two adjacent one-point matchings in the edits from $\vect{x}^c$ to $\vect{z}^c$, merge them into a pairwise matching.\\
    In either case, we can now form new sequence of valid DOPE edits $S_{\vect{x}^c\vect{z}^c}$ converting $\vect{x}^c$ into $\vect{z}^c$. By the triangle inequality of the preceding lemma on individual compositions of edits  and the cases above, it then follows that: $\gamma(S_{\vect{x}^c\vect{z}^c}) \leq \gamma(S_{\vect{y}^c\vect{z}^c} \circ S_{\vect{x}^c\vect{y}^c}) = \gamma(S_{\vect{y}^c\vect{z}^c}) + \gamma(S_{\vect{x}^c\vect{y}^c})$
\end{proof}






\begin{lemma}[The DOPE distance is a (pseudo)-metric]
    Let $T$ be the collection of all one-dimensional time series. Then $\mathrm{dope} \colon T \times T \rightarrow \mathbb{R}$ satisfies psuedo-metric properties. 
\end{lemma}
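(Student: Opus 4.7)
The plan is to verify each pseudometric axiom in turn, leaning heavily on the edit-operation reformulation and Lemma~\ref{compositionlemma}. The first three axioms are essentially immediate, so I would dispatch them quickly and then devote the bulk of the argument to the triangle inequality.

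For the trivial axioms, I would proceed as follows. The identity axiom $\dope{\vect{x},\vect{x}}=0$ follows by exhibiting the alignment $A=(M,\emptyset,\emptyset)$ with $M=\{(\vect{x}^c_i,\vect{x}^c_i)\}_i$, whose cost is visibly $0$. Non-negativity is immediate since every term in $C(A)$ is an absolute value. For symmetry, given any alignment $A=(M,R_{\vect{x}^c},R_{\vect{y}^c})$ of $(\vect{x},\vect{y})$, the triple $A'=(M',R_{\vect{y}^c},R_{\vect{x}^c})$, where $M'$ is obtained from $M$ by swapping the coordinates of each pair, is a valid alignment of $(\vect{y},\vect{x})$ satisfying the ordering conditions, and $C(A)=C(A')$ term by term; taking infima over both sides gives $\dope{\vect{x},\vect{y}}=\dope{\vect{y},\vect{x}}$.

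The main obstacle is the triangle inequality, and the plan is to route it through the edit-operation formulation already set up in the excerpt. The first step is to observe that an alignment $A$ of $(\vect{x},\vect{y})$ canonically determines a sequence of edit operations $S_{\vect{x}^c\vect{y}^c}$ (its matchings, deletions, and insertions) whose total edit cost $\gamma(S_{\vect{x}^c\vect{y}^c})$ equals $C(A)$, and conversely every sequence of DOPE edits converting $\vect{x}^c$ into $\vect{y}^c$ arises in this way from some valid alignment. In particular,
\[
\dope{\vect{x},\vect{y}}=\min\{\gamma(S_{\vect{x}^c\vect{y}^c})\mid S_{\vect{x}^c\vect{y}^c}\text{ converts }\vect{x}^c\text{ into }\vect{y}^c\}.
\]

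Given this, the triangle inequality follows in a short chain. Fix a third time series $\vect{z}$ and let $S_{\vect{x}^c\vect{y}^c}^{*}$ and $S_{\vect{y}^c\vect{z}^c}^{*}$ be optimal edit sequences realizing $\dope{\vect{x},\vect{y}}$ and $\dope{\vect{y},\vect{z}}$, respectively. By Lemma~\ref{compositionlemma}, there exists a valid edit sequence $S_{\vect{x}^c\vect{z}^c}$ with
\[
\gamma(S_{\vect{x}^c\vect{z}^c})\leq \gamma(S_{\vect{x}^c\vect{y}^c}^{*})+\gamma(S_{\vect{y}^c\vect{z}^c}^{*})=\dope{\vect{x},\vect{y}}+\dope{\vect{y},\vect{z}}.
\]
Taking the infimum on the left via the edit-sequence characterization of DOPE gives $\dope{\vect{x},\vect{z}}\leq\dope{\vect{x},\vect{y}}+\dope{\vect{y},\vect{z}}$, as required. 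The only subtle point I anticipate is justifying cleanly that the composed/resolved edits produced by Lemma~\ref{compositionlemma} indeed arise from a legitimate alignment (the ordering condition on $M$ must be preserved, and each critical point must appear exactly once); this is where I would be most careful, appealing to the overlap-resolution cases in the proof of Lemma~\ref{compositionlemma} to verify that the induced matching respects the time order and that the pairwise-matching requirement for adjacent one-point matchings is honored.
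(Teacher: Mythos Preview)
Your proposal is correct and follows essentially the same route as the paper: the first three axioms are dispatched exactly as you describe, and the triangle inequality is obtained by passing to optimal edit sequences, invoking Lemma~\ref{compositionlemma} to produce a composed sequence $S_{\vect{x}^c\vect{z}^c}$, and then converting that sequence back into an alignment to bound $\dope{\vect{x},\vect{z}}$ from above. The paper's proof is slightly less explicit about the bijection between alignments and edit sequences (it handles the ``conversely'' direction via an appendix lemma), but your anticipated subtlety---that the resolved edits from Lemma~\ref{compositionlemma} must yield a valid alignment respecting the ordering and pairing constraints---is exactly the point the paper addresses by converting insertions into $\vect{x}^c$ to deletions from $\vect{z}^c$.
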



\begin{proof}
1) Non-negativity follows trivially from the definition of dope.\\
2) Suppose $\vect{x}=\vect{y}$. Then $\vect{x}^c = \vect{y}^c$, so there exists an alignment $A^*$ between $\vect{x}$ and $\vect{y}$ consisting of just matchings ($R_\vect{x} =R_{\vect{y}^c} = \emptyset$). The cost of this alignment is $C(A^*) = 0$, and non-negativity gives that $0\leq \dope{\vect{x},\vect{y}} \leq C(A^*) =0$, so $\dope{\vect{x}, \vect{y}} = 0$.  \\
3) Symmetry also follows trivially from the definition of the DOPE distance, since if $A$ is an alignment of $\vect{x}$ and $\vect{y}$ then equivalently $A$ is alignment of $\vect{y}$ and $\vect{x}$.\\
4) We now show that the triangle inequality for DOPE follows from the triangle inequality on the individual edit operations.  Let $\vect{x}, \vect{y}, \vect{z}$ be time series, and consider $\dope{\vect{x},\vect{y}}$ and $\dope{\vect{y},\vect{z}}$. By definition, there exist alignments $A_{\vect{x}\vect{y}}$ and $A_{\vect{y}\vect{z}}$ such that $C(A_{\vect{x}\vect{y}}) = \dope{\vect{x},\vect{y}}$ and $C(A_{\vect{y}\vect{z}}) = \dope{\vect{y},\vect{z}}$. From these alignments, we can produce sequences of edits $S_{\vect{x}^c\vect{y}^c}$ and $S_{\vect{y}^c\vect{z}^c}$ converting $\vect{x}^c$ into $\vect{y}^c$ and $\vect{y}^c$ into $\vect{z}^c$, respectively. Note that the order of performing these edits does not affect the total cost of the sequence, so $\gamma(S_{\vect{x}^c\vect{y}^c}) = C(A_{\vect{x}\vect{y}})$ and $\gamma(S_{\vect{y}^c\vect{z}^c}) = C(A_{\vect{y}\vect{z}})$. By Lemma \ref{compositionlemma}, there exists a sequence of edits $S_{\vect{x}^c\vect{z}^c}$ converting $\vect{x}^c$ into $\vect{z}^c$ such that $\gamma(S_{\vect{x}^c\vect{z}^c}) \leq \gamma(S_{\vect{x}^c\vect{y}^c}) + \gamma(S_{\vect{y}^c\vect{z}^c})$. From sequence $S_{\vect{x}^c\vect{z}^c}$, we form an alignment $A_{\vect{x}\vect{z}}$ of time series $\vect{x}$ with time series $\vect{z}$ by converting those operations which insert into $\vect{x}^c$ to deletions from $\vect{z}^c$. But this means there exists alignment $A_{\vect{x}\vect{z}}$ such that $C(A_{\vect{x}\vect{z}}) \leq \gamma(S_{\vect{x}\vect{y}}) + \gamma(S_{\vect{y}\vect{z}}) = \dope{\vect{x},\vect{y}} +  \dope{\vect{y},\vect{z}}$. Since $\dope{\vect{x},\vect{y}}$ is the minimum alignment cost over all alignments between $\vect{x}$ and $\vect{z}$, it follows that
$
    \dope{\vect{x},\vect{z}} \leq C(A_{\vect{x}\vect{z}}) \leq \gamma(S_{\vect{x}\vect{y}}) + \gamma(S_{\vect{y}\vect{z}}) = \dope{\vect{x},\vect{y}} +  \dope{\vect{y},\vect{z}}$\end{proof}

Note that if we restrict to only critical time series $\vect{x}^c$, then $\dope{-,-}$ becomes a bona fide metric, i.e., $\dope{\vect{x}^c,\vect{y}^c}=0$ if and only if $\vect{x}^c=\vect{y}^c$. 

\subsection{Informativity}
\label{sec:informativity}


\begin{figure}[h]
\centering
\includegraphics[width=\textwidth]{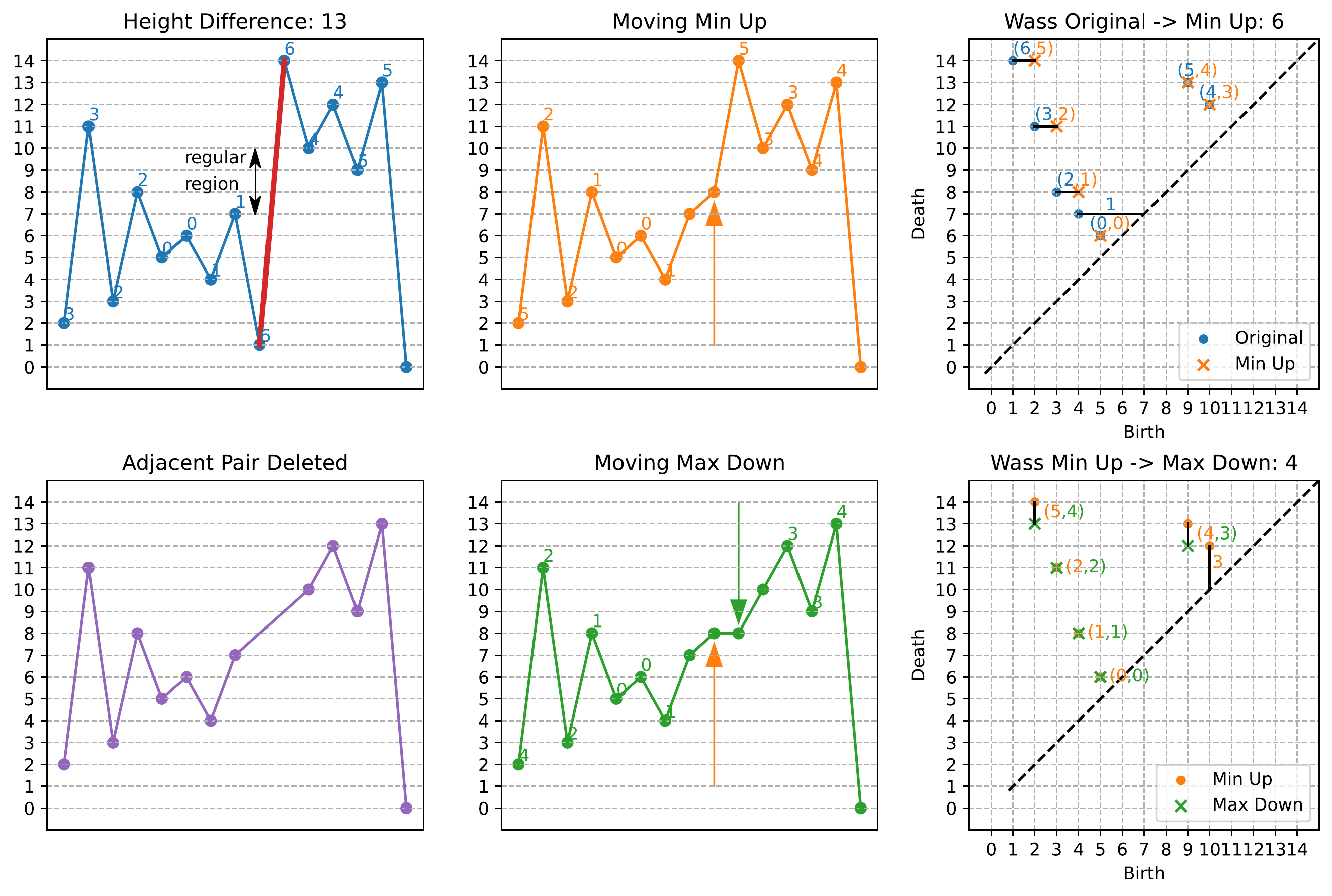}
\caption{Deleting a min-max pair is akin to changing the min's and max's heights. We delete the red min-max pair by moving the min up, and then by moving the max down. Multiple persistence pairing changes happen, but only one point moves at a time; the total L1 Wasserstein distance is at most the cost (height difference) we pay for the DOPE distance in general. In this example, the total Wasserstein distance along the path we track is $6+4=10$, which is $3$ less than the height difference. This is because, between heights $7$ and $10$, moving the min up from and the max down to height $8$ has no effect on the persistence diagrams, as the neighboring points have become regular.}
\label{fig:Informativity}
\end{figure}

\begin{proposition}
    Let $\vect{x}, \vect{y}$ be time series and $\mathrm{Dgm}(\vect{x}), \mathrm{Dgm}(\vect{y})$ the persistence diagrams for $\vect{x}^c$ and $\vect{y}^c$, respectively. Then:
    
    $\mathrm{d}_\mathcal{W}^1(\mathrm{Dgm}(\vect{x}),\mathrm{Dgm}(\vect{y})) \leq \dope{\vect{x}, \vect{y}}$
\end{proposition}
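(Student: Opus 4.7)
The plan is to fix an optimal DOPE alignment, decompose it into atomic edits as in the proof of the triangle inequality, realize each edit as a perturbation of the underlying piecewise-linear time series, apply the Skraba--Turner $L^1$-stability theorem \cite{skraba2020wasserstein} on each leg, and assemble the result via the triangle inequality for $\mathrm{d}_\mathcal{W}^1$.

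First, I would fix an optimal alignment $A^*$ with $C(A^*)=\dope{\vect{x},\vect{y}}$ and convert it, as in the proof of Lemma~\ref{compositionlemma}, into a sequence of atomic edits $s_1,\ldots,s_n$ that carry $\vect{x}^c=\vect{w}_0^c$ through intermediate critical time series $\vect{w}_1^c,\ldots,\vect{w}_n^c=\vect{y}^c$ with $\sum_{i}\gamma(s_i)=C(A^*)$. Since $\mathrm{d}_\mathcal{W}^1$ is a pseudometric, repeated application of the triangle inequality reduces the proposition to establishing the per-edit stability bound
\[
    \mathrm{d}_\mathcal{W}^1\bigl(\mathrm{Dgm}(\vect{w}_{i-1}),\mathrm{Dgm}(\vect{w}_i)\bigr) \leq \gamma(s_i)
\]
for each $i$, and then summing.

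The main work is a case split on the type of atomic edit. For a one-point matching, $\vect{w}_{i-1}$ and $\vect{w}_i$ agree as piecewise linear functions except that a single critical value has shifted by $\gamma(s_i)$, so the $\ell^1$ perturbation is exactly $\gamma(s_i)$ and Skraba--Turner yields the bound. A pairwise matching is handled analogously in the first branch of its $\min$; the second branch, being a deletion followed by an insertion, reduces to the remaining two cases. For a pairwise deletion of a min-max pair with heights $(a,b)$, I would lift the min from $a$ up to $b$ as depicted in Figure~\ref{fig:Informativity}; this single-vertex perturbation has $\ell^1$ cost $|b-a|=\gamma(s_i)$, so Skraba--Turner guarantees the diagram moves by at most $|b-a|$ in $\mathrm{d}_\mathcal{W}^1$ even as elder-rule pairings reshuffle several times along the way. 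At the endpoint of the lift both vertices share the height $b$ and are no longer strict extrema, so deleting them from the critical sequence produces $\vect{w}_i^c$ without further changing $\mathrm{Dgm}$. Pairwise insertion is handled symmetrically by reversing the deformation.

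The hardest step to nail down will be the deletion case, where one must trust Skraba--Turner to absorb all intermediate swaps of persistence-pair partners into the single $|b-a|$ budget rather than letting them accumulate separately; Figure~\ref{fig:Informativity} serves as a sanity check that this does in fact happen (and with strict inequality when the lifting min renders a neighboring extremum degenerate early). Once this case is in hand, summing the per-edit bounds yields $\mathrm{d}_\mathcal{W}^1(\mathrm{Dgm}(\vect{x}),\mathrm{Dgm}(\vect{y})) \leq \sum_i \gamma(s_i) = C(A^*) = \dope{\vect{x},\vect{y}}$, which is the claim.
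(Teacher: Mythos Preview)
Your overall plan---decompose the optimal alignment into atomic edits, bound the $\mathrm{d}_\mathcal{W}^1$ change per edit, and sum via the triangle inequality---is exactly the paper's strategy. Invoking Skraba--Turner is a clean shortcut: the paper's Case~1/Case~2 argument is essentially a bare-hands proof of that theorem for path graphs, so you are citing rather than reproving it.

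There is, however, a genuine gap in your deletion step. You lift only the min from $a$ up to the max's height $b$ and then assert that removing the two now-equal vertices leaves $\mathrm{Dgm}$ unchanged. That assertion fails whenever the outer neighbor $p$ of the min satisfies $p<b$: although neither lifted vertex is individually a \emph{strict} extremum, together they form a plateau at height $b$ that is still a local maximum of the PL function, and excising it drops that maximum down to $p$. Concretely, take the critical sequence $2,5,1,10,3$ and delete the pair $(1,10)$. Your lift produces $2,5,10,10,3$, with diagram $\{(2,\infty),(3,10)\}$; the post-deletion sequence $2,5,3$ has diagram $\{(2,\infty),(3,5)\}$. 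The ``free'' removal you claim in fact costs $5$ in $\mathrm{d}_\mathcal{W}^1$, so the per-edit budget of $|b-a|=9$ can no longer cover both legs once you account for it honestly.

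The repair is precisely what Figure~\ref{fig:Informativity} (which you cite) actually depicts and what the paper's proof says in prose: raise the min \emph{and} lower the max, each only until it becomes regular relative to its outer neighbor. The combined vertical movement is still at most $|b-a|$, but now the endpoint genuinely agrees with the post-deletion diagram, because both modified values lie weakly between their outer neighbors and the segment through them is monotone. With that adjustment your Skraba--Turner route goes through unchanged.
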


\begin{proof} We will show that performing any edit operation to $\vect{x}^c$ contributes at most the edit cost to the corresponding 1-Wasserstein distance between persistence diagrams. Intuitively, one can view edit operations as modifying the heights of critical points of $\vect{x}^c$. First observe that deleting a min-max pair is equivalent to raising the min and lowering the max until both become regular points (Figure \ref{fig:Informativity}). Similarly, a matching is equivalent to changing the height of a single point in $\vect{x}^c$. Since insertions into $\vect{x}^c$ are deletions from $\vect{y}^c$, all edit operations can be seen as modifying heights of points in the time series. Let $\vect{x}^c$ be a critical point time series and construct $\vect{\tilde{x}}^c$ by performing one edit to $\vect{x}^c$. Without loss of generality, suppose the edit raises a minimum $m$ in $\vect{x}^c$ by height $h$. This gives two cases for resulting changes in the persistence diagram of $\vect{x}^c$. Let $(b_m, d_m)$ be the birth-death pair corresponding to $m$.\\
\textbf{Case 1:} No persistence pairings change as a result of the edit. All birth-death pairs in $\mathrm{Dgm}(\vect{x}^c)$ and $\mathrm{Dgm}(\vect{\tilde{x}}^c)$ agree except for $(b_m, d_m)$ and $(b_m+h, d_m)$. Define a 1-Wasserstein matching by matching identical persistence pairs between $\vect{x}^c$ and $\vect{\tilde{x}}^c$ and matching $(b_m,d_m)$ in  $\mathrm{Dgm}(\vect{x}^c)$ to $(b_m+h,d_m)$ in $\mathrm{Dgm}(\vect{\tilde{x}}^c)$. The cost of this matching is $(m+h)-m = h$ and thus $\mathrm{d}_\mathcal{W}(\mathrm{Dgm}(\vect{x}^c), \mathrm{Dgm}(\vect{\tilde{x}}^c)) \leq h = \dope{\vect{x}^c, \vect{\tilde{x}}^c}$.\\
\textbf{Case 2:} Persistence pairings change as a result of the edit.
Without loss of generality, suppose only one change in persistence pairings occurs a result of $m$ surpassing the height of another minimum in $\vect{x}^c$ at height $b_m+h_1$, and let $\mathrm{Dgm}(\vect{\hat{x}}^c))$ be persistence diagram occuring when the height of $m$ is $b_m+h_1$. Since $m$ can only belong to one persistence pairing at a time, changing the height of $m$ only changes one birth-death pair at a time and we can accordingly partition the edit into two parts: raising $m$ by $h_1$, and raising it from $b_m+h_1$ to $b_m+h$. During both of these sub-edits put us back in Case 1. Note that if the change in persistence pairings at height $b_m+h_1$ results from $m$ becoming a regular point, the second sub-edit induces no change to the persistence diagrams. Thus $\mathrm{d}_\mathcal{W}(\mathrm{Dgm}(\vect{x}^c), \mathrm{Dgm}(\vect{\tilde{x}}^c)) = \mathrm{d}_\mathcal{W}(\mathrm{Dgm}(\vect{x}^c), \mathrm{Dgm}(\vect{\hat{x}}^c)) + \mathrm{d}_\mathcal{W}(\mathrm{Dgm}(\vect{\hat{x}}^c), \mathrm{Dgm}(\vect{\tilde{x}}^c)) \leq (m+h_1)-m + (m+h)-(m+h_1) = h = \dope{\vect{x}^c, \vect{\tilde{x}}^c}$. \\
Therefore, since moving a point in the time series until it becomes a regular point corresponds one-to-one with movement in the persistence diagram, regardless of whether this movement in the persistence diagram is split up across multiple points travelling along disjoint paths, for any single edit from $\vect{x}^c$ to $\vect{\tilde{x}}^c$, $\mathrm{d}_\mathcal{W}(\mathrm{Dgm}(\vect{x}^c), \mathrm{Dgm}(\vect{\tilde{x}}^c)) \leq \dope{\vect{x}^c, \vect{\tilde{x}}^c}$. Since $\dope{\vect{x}, \vect{y}}$ corresponds to a sequence of edits, the statement of the theorem then follows.\end{proof}

\subsection{Stability}

Claim: The DOPE distance is 1-stable by Definition~\ref{def:stable}.

\begin{proof}
    Let $M_c$ and $N_c$ be the lengths of $\vect{x^c}$ and $\vect{x^c}$, respectively, and zeropad them past their range as in Definition~\ref{def:stable}.  Then for $p=1$, we have
    
    \[ \sum_{i = 1}^{\max{M_c, N_c}} |\vect{x^c_i} - \vect{y^c_i}| = \left( \sum_{i=1}^{ \min{M_c, N_c}} |\vect{x^c_i} - \vect{y^c_i}|  \right) + \left( \sum_{i=M_c+1}^{N_c} |\vect{y^c_i}|  \right) + \left( \sum_{i=N_c+1}^{M_c} |\vect{x^c_i}|  \right)   \]
    where we take the convention that the sums are 0 if the upper index is less than the starting index. This sum upper bounds a possible dope matching.  The first term upper bounds a cost of the $L1$ matching the first $\min\{M_c, N_c\}$ points of $\vect{x^c}$ and $\vect{y^c}$.  The second term upper bounds the deletion cost of any parts of $\vect{y^c}$ that go beyond the range of $\vect{x^c}$, and the third term upper bounds the deletion cost of any parts of $\vect{x^c}$ that go beyond the range of $\vect{y^c}$.  Since we have shown how to construct a dope matching whose cost is upper bounded by the 1-stability definition, the optimal dope distance is also bounded.
    
    One case we have to consider is if the points deleted from $\vect{x^c}$ or $\vect{y^c}$ are odd in number, which would violate our rules of deleting in pairs.  However, since we're comparing time series on same domain, the Euler characteristic $\chi$ of their domains is the same.  This ensures that $(x_{\text{mins}} - x_{\text{maxes}}) = (y_{\text{mins}} - y_{\text{maxes}}) = \chi$, and so $(x_{\text{mins}} + x_{\text{maxes}}) - (y_{\text{mins}} + y_{\text{maxes}}) = 2 (x_{\text{maxes}} -  y_{\text{maxes}}) = M_c - N_c$, so $M_c-N_c$ is always even, and this will never happen.\end{proof}

\subsection{Efficient Computation}

Our algorithm to efficiently compute the DOPE distance uses the fact that critical points in a DOPE matching must match {\em in sequence}.  Let $d_{i,j}$ be the DOPE distance between the first $i$ critical points of a time series $\vect{x}$ and the first $j$ critical points of a time series $\vect{y}$ in sequence.  Then let $\vect{x}^c_i$ and $\vect{y}^c_j$ refer to the $i^\text{th}$ and $j^\text{th}$ critical points in $\vect{x}$ and $\vect{y}$, respectively, and let $\vect{x}^{cm}$ and $\vect{y}^{cm}$ be indicator functions for whether $\vect{x}^c$ and $\vect{y}^c$ are mins or maxes, i.e., $\vect{x}^{cm}_i = -1$ if $\vect{x}^c_i$ is a min and $\vect{x}^{cm}_i = 1$ if $\vect{x}^c_i$ is a max.  We also use the convention that an index of $0$ in $d$ refers to the emptyset, so that, for instance, $d_{0,j}$ would be the DOPE distance between the emptyset and the first $j$ critical points of $\vect{y}$.  Then the following recurrence holds:

\begin{lemma}
\label{lemma:recursivesubproblems}

\begin{equation}
    d_{i,j} =  \left\{
        \begin{array}{cc}
         0 & i = 0, j = 0 \\
        \\
         \displaystyle\sum_{\ell=1}^i \vect{x}^c_{\ell} \vect{x}^{cm}_{\ell} & i \Mod{2} = 0, j = 0 \\
        \\
         \displaystyle\sum_{\ell=1}^j \vect{y}^c_{\ell} \vect{y}^{cm}_{\ell} & i = 0, j \Mod{2} = 0 ,\\
        \\
         
        \min \left\{ 
        \begin{array}{c} 
        
            \left\{ 
                \begin{array}{cc}
                d_{i-1, j-1} + |\vect{x}^c_i - \vect{y}^c_j|  & \vect{x}^{cm}_i = \vect{y}^{cm}_j\\
                \infty & \text{otherwise}
                \end{array}
            \right\} \\
         \\ 
                        d_{i-2, j} + |\vect{x}^c_i - \vect{x}^c_{i-1}| \\
                        d_{i, j-2} + |\vect{y}^c_j - \vect{y}^c_{j-1}| \\
        \end{array}  \right\} & i, j \geq 2 \\
        \\
        
         \infty & \text{otherwise}

        \end{array}
    \right\}
\end{equation}

\end{lemma}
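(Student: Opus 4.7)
The plan is to prove the recurrence by strong induction on $i+j$, coupling a lower bound obtained from case analysis on the status of $\vect{x}^c_i$ and $\vect{y}^c_j$ in an optimal alignment with a matching upper bound obtained by concretely extending a smaller optimal alignment by one edit operation. The proof relies only on the definition of an alignment and its cost, together with the fact (guaranteed by Definition~\ref{def:critpoint}) that $\vect{x}^c$ strictly alternates between mins and maxes starting with a min.

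For the base clauses, the observation is that when $j=0$ the set $M$ is empty, so every element of $\vect{x}^c_{1:i}$ must participate in a deletion pair in $R_{\vect{x}^c}$. Because min-max pairs are adjacent indices in $\vect{x}^c$ and the sequence strictly alternates, the only feasible pair-cover is $\{(1,2),(3,4),\ldots\}$, which exists if and only if $i$ is even. When $i$ is even the cost is $\sum_{k=1}^{i/2}(\vect{x}^c_{2k}-\vect{x}^c_{2k-1})$, which collapses into $\sum_{\ell=1}^{i}\vect{x}^c_\ell\vect{x}^{cm}_\ell$ via the sign encoding in $\vect{x}^{cm}$; when $i$ is odd no feasible alignment exists, yielding $\infty$. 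The symmetric case $i=0,\ j$ even is identical.

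For the recursive clause with $i,j\geq 2$, I would analyze the status of the last two elements $\vect{x}^c_i$ and $\vect{y}^c_j$ in an optimal alignment $A$. The order-preservation constraint on $M$ enforces exactly one of three non-dominated situations: (i) $(\vect{x}^c_i,\vect{y}^c_j)\in M$, (ii) $\vect{x}^c_i$ is deleted, in which case its partner must be $\vect{x}^c_{i-1}$ since $\vect{x}^c_{i+1}$ is unavailable, or (iii) symmetrically, $(\vect{y}^c_{j-1},\vect{y}^c_j)\in R_{\vect{y}^c}$. Removing the last edit in each situation produces a valid alignment on the correspondingly smaller prefix, so by induction $C(A)$ is at least the matching term of the $\min$. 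The degenerate possibility that both $\vect{x}^c_i$ and $\vect{y}^c_j$ are deleted is absorbed using the inductive bound $d_{i-2,j}\leq d_{i-2,j-2}+|\vect{y}^c_j-\vect{y}^c_{j-1}|$. The upper bound $d_{i,j}\leq$ each term in the $\min$ follows by appending the corresponding edit to an optimal alignment of the smaller prefix.

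The main obstacle is justifying the type-compatibility constraint $\vect{x}^{cm}_i=\vect{y}^{cm}_j$ attached to the match term, as well as the $\infty$ entries in the $\textit{otherwise}$ clause. Both reduce to a single parity argument: in any alignment with matched indices $a_1<\cdots<a_m$ in $\vect{x}^c$ and $b_1<\cdots<b_m$ in $\vect{y}^c$, each of the gaps $[1,a_1{-}1]$, $[a_k{+}1,a_{k+1}{-}1]$, $[a_m{+}1,i]$ (and likewise in $\vect{y}^c$) must be pair-deletable, forcing them to have even length. This makes $a_1,b_1$ odd and $a_{k+1}-a_k,\ b_{k+1}-b_k$ odd, so $a_k\equiv b_k\pmod 2$ for every $k$. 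Under the starting-min convention, position parity coincides with critical-point type, so $\vect{x}^{cm}_{a_k}=\vect{y}^{cm}_{b_k}$, and in particular for the terminal match $(i,j)$. The same parity bookkeeping shows that in the remaining index regimes (e.g.\ $(i,0)$ with $i$ odd, or $(1,j)$ with $j\geq 2$) no valid alignment exists, justifying the $\infty$ entries.
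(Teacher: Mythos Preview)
Your argument follows exactly the paper's strategy: a case split on how the terminal elements $\vect{x}^c_i$ and $\vect{y}^c_j$ participate in an optimal alignment, combined with optimal substructure. The paper's proof is a short sketch that simply asserts ``these are the only three ways to deal with the values at the end of $\vect{x}^c$ and $\vect{y}^c$''; you supply the induction framing, the explicit lower/upper bound split, the parity bookkeeping that forces $\vect{x}^{cm}_i=\vect{y}^{cm}_j$ for a terminal match, and the absorption of the both-deleted subcase via $d_{i-2,j}\leq d_{i-2,j-2}+|\vect{y}^c_j-\vect{y}^c_{j-1}|$, all of which the paper leaves implicit.

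One slip in your treatment of the \emph{otherwise} clause: the claim that ``$(1,j)$ with $j\geq 2$'' admits no valid alignment is false when $j$ is odd---match $\vect{x}^c_1$ to $\vect{y}^c_1$ and pair-delete $(\vect{y}^c_2,\vect{y}^c_3),\ldots,(\vect{y}^c_{j-1},\vect{y}^c_j)$. This is really an artifact of the lemma's stated index range $i,j\geq 2$ for the recursive clause; the intended reading (consistent with the algorithm in Appendix~\ref{sec:dynamicdope}) is that the recursive clause applies for all $i,j\geq 1$, with out-of-range subproblems such as $d_{-1,j}$ taken as $\infty$. Under that convention your parity argument does correctly dispose of the genuinely infeasible boundary cases, namely $(i,0)$ with $i$ odd and $(0,j)$ with $j$ odd.
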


\begin{proof}
    The first three cases cover the boundary conditions, where the only option is to delete all critical points in pairs since one of the time series is empty.  The cost of this is the sum of max heights minus the sum of min heights.  The fourth condition covers the three general cases in the recurrence.  In the first such case, we consider matching the last time critical points $\vect{x}^c_i$ and $\vect{y}^c_j$ if they are either both mins or both maxes.  Then, the rest of the matching $d_{i-1, j-1}$ up to this point can be solved optimally independently of this choice.  The other two cases consider deleting the the min/max or max/min pairs at the end of either time series.  Since these are the only three ways to deal with the values at the end of $\vect{x}^c$ and $\vect{y}^c$, and the solutions of their respective subproblems are independent of their costs, taking the minimum of the three yields an optimal cost for $d_{i, j}$.
\end{proof}
A straightforward dynamic programming algorithm, much like the textbook algorithm to compute dynamic time warping and Levenshtein distance \cite{levenshtein1966binary}, follows from Lemma~\ref{lemma:recursivesubproblems}; more details can be seen in Appendix~\ref{sec:dynamicdope}.  Let $M$ and $N$ be the length of time series $\vect{x}$ and $\vect{y}$, respectively, and let $M_c$ and $N_c$ be the number of critical points on each time series, respectively.  Then the complexity of the algorithm is $O(M+N + M_c N_c)$.  Since $M_c \leq M$ and $N_c \leq N$, then the algorithm is $O(MN)$, matching the complexity of the textbook algorithm for dynamic time warping.  In practice, though, there will be many fewer critical points than overall points in the time series so this will be more efficient than dynamic time warping.  Notably, the bound is also significantly better than the $O((M_c+N_c)^3)$ bound used to compute a Wasserstein matching using the Hungarian algorithm \cite{kuhn1955hungarian}.

\section{Circular Domains}

\begin{figure}
\centering
\includegraphics[width=0.8\textwidth]{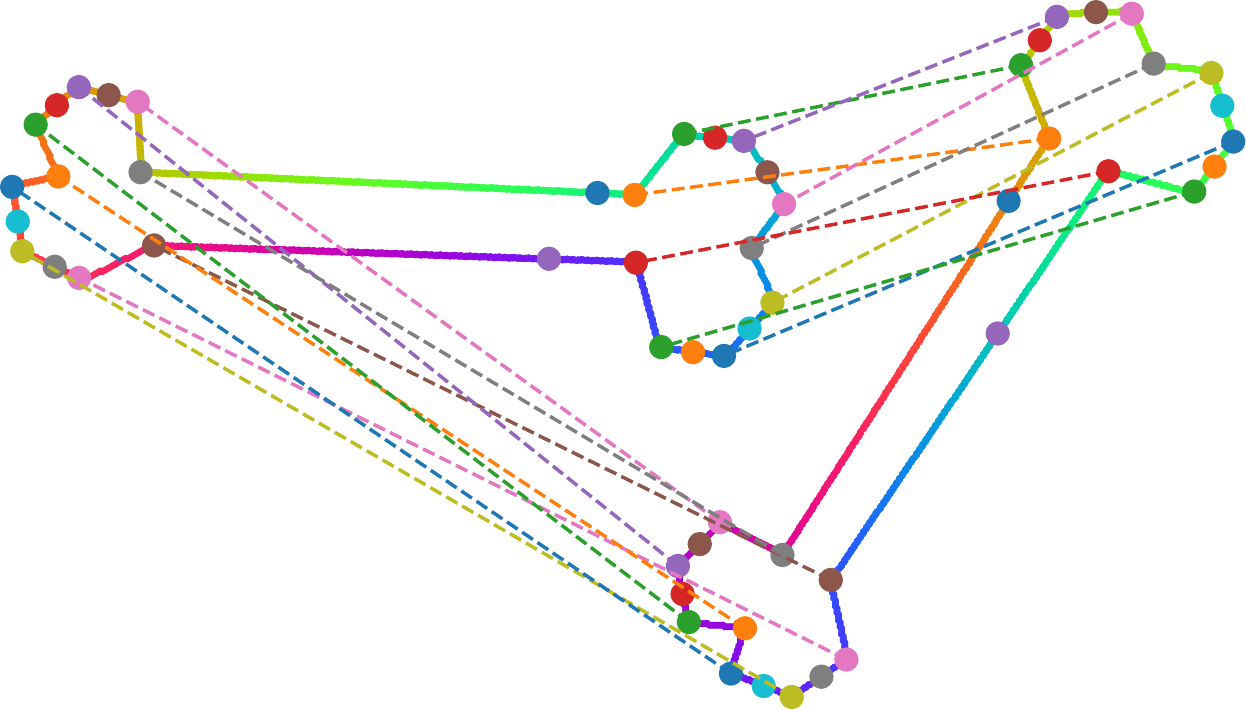}
\caption{An example of C-DOPE matching two bones from the mpeg-7 dataset.  Color indicates parameterization, and corresponding matched critical points of signed curvature are drawn as the same color.  C-DOPE correctly matches critical points, even under different parameterizations}
\label{fig:CircularDOPE_Example}
\end{figure}

We now extend DOPE to work on circular domains in addition to intervals.  Suppose we have two circular time series $\vect{x}$ and $\vect{y}$ (Definition~\ref{def:timeseries}):

\begin{definition}
    \label{def:cdope}
    The C-DOPE (Circular Dynamic Ordered Persistent Edit) distance between two circular time series $\vect{x}$ and $\vect{y}$ of with $M_c$ and $N_c$ critical points, respectively, is
    $
        \label{eq:cdope}
        \cdope{\vect{x}, \vect{y}} = \min_{i, j} \left( \dope{\overset{\rightarrow i}{\vect{x}}, \overset{\rightarrow j}{\vect{y}}} \right)
    $
over all $i = 1, 2, \hdots M_c$ and $j = 1, 2, \hdots N_c$.  

\end{definition}

In other words, C-DOPE is the minimum DOPE distance between all possible circularly shifted representative critical point time series of each equivalence class of circularly shifted time series.  Like DOPE, C-DOPE also satisfies stability, informativity, and metric properties.  The proofs of these are very similar, so we do not repeat them here.  We can compute C-DOPE naively in $O(M + N + M_c^2 N_c^2)$ time directly from the definition, though it is more efficient to hold one representative fixed and to shift the other:

\begin{lemma}
    $
        \cdope{\vect{x}, \vect{y}} = \min \left(  \min_j (\dope{\vect{x}, \overset{\rightarrow j}{\vect{y}}}), \min_j (\dope{\overset{\rightarrow 1}{\vect{x}}, \overset{\rightarrow j}{\vect{y}}})  \right)
    $
\end{lemma}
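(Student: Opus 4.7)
The plan is to prove equality by bounding each side by the other. The direction $\cdope{\vect{x},\vect{y}} \leq$ RHS is immediate: both terms in the RHS minimum range over pairs $(i,j)$ with $i \in \{0,1\}$ (using the identification $\overset{\rightarrow 0}{\vect{x}} = \vect{x}$), which is a sub-family of the index set in Definition~\ref{def:cdope}; minimizing over a subset cannot produce a smaller value.

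For the reverse direction, I would take an optimal alignment $A^* = (M^*, R_{\vect{x}^c}^*, R_{\vect{y}^c}^*)$ realizing $\dope{\overset{\rightarrow i^*}{\vect{x}}, \overset{\rightarrow j^*}{\vect{y}}} = \cdope{\vect{x},\vect{y}}$ and re-interpret its matched and deleted pairs as a ``cyclic alignment'' on the critical point cycles of $\vect{x}$ and $\vect{y}$. The crucial observation is that the cost $C(A^*)$ is a sum of $|\vect{x}^c_a - \vect{y}^c_b|$ terms over matched pairs together with $|\vect{x}^c_p - \vect{x}^c_{p+1}|$ and $|\vect{y}^c_q - \vect{y}^c_{q+1}|$ terms over deleted pairs; each summand depends only on the identity of the critical points involved, not on where the cycle is cut when linearizing. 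Hence any re-linearization by shifts $(i', j')$ for which the matching/deletion pattern of $A^*$ is still a valid linear alignment yields a linear DOPE cost equal to $C(A^*)$.

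To produce such a pair with $i' \in \{0,1\}$, I would case on the fate of the first critical point $\vect{x}^c_1$ in the cyclic alignment. (a) If $\vect{x}^c_1$ is matched to some $\vect{y}^c_b$, set $i' = 0$ and choose $j'$ so that $\vect{y}^c_b$ is the first critical point of $\overset{\rightarrow j'}{\vect{y}^c}$, placing the matched pair at the head of both linear sequences. (b) If $\vect{x}^c_1$ is cyclically deleted in the pair $(\vect{x}^c_1, \vect{x}^c_2)$, set $i' = 0$: this pair sits at the leading legal adjacent positions of the linear sequence. (c) If $\vect{x}^c_1$ is cyclically deleted in the wrap-around pair $(\vect{x}^c_{M_c}, \vect{x}^c_1)$, set $i' = 1$: the pair now occupies the final adjacent positions of $\overset{\rightarrow 1}{\vect{x}^c}$, still a legal deletion, and the new cut straddles the non-deleted pair $(\vect{x}^c_1, \vect{x}^c_2)$. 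The companion shift $j'$ is read off by placing the cut in $\vect{y}$'s cycle in the gap matching the chosen gap in $\vect{x}$'s cycle; since the outer $\min_j$ in the RHS ranges over every $j \in \{1, \ldots, N_c\}$, the required $j'$ is attained. The edge case where $A^*$ consists entirely of deletions is handled identically: the only two cyclic all-deletion patterns on $\vect{x}^c$ correspond to $i' = 0$ and $i' = 1$.

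The main technical obstacle will be verifying that, after the cyclic relabeling dictated by $(i',j')$, the matched pairs of $A^*$ still satisfy the order and parity constraints of a DOPE alignment. Order preservation holds because a cyclic shift of both sequences by the same relative amount preserves the linear order of matched pairs, provided the new cut falls in a common gap of the cyclic alignment, which is exactly what the case analysis ensures. Parity (min vs.\ max) is intrinsic to each critical point and hence unaffected by the shift. The only remaining obstruction to a valid linear alignment is a deletion pair straddling the cut, and cases (a)--(c) are designed so that one of $i' \in \{0, 1\}$ always avoids this.
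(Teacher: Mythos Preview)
Your proposal is correct and follows exactly the approach the paper sketches: the paper's entire argument is the one-line remark that shifting $\vect{x}$ by one ``takes care of the case where an optimal solution deletes the min/max pair occurring in $\vect{x}_0$ and $\vect{x}_{M_c-1}$,'' and your case analysis on the fate of $\vect{x}^c_1$ is precisely the elaboration of that remark. The one place you are slightly informal---asserting that a compatible $j'$ in the matching gap can always be chosen without splitting a $\vect{y}$ deletion pair---is true (cut immediately after the last matched $\vect{y}$ element preceding the gap) and no more hand-wavy than the paper itself.
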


The proof of this follows from Definition~\ref{def:cdope}, and such a scheme is cubic $O(M_c N_c^2)$.  The only subtlety is that we need to do this twice: once holding $\vect{x}$ fixed and shifting $\vect{y}$, and once after circularly shifting $\vect{x}$ by one and shifting $\vect{y}$.  Doing this a second time on $\overset{\rightarrow 1}{\vect{x}}$ takes care of the case where an optimal solution deletes the min/max pair occurring in $\vect{x}_0$ and $\vect{x}_{M_c-1}$.

As an example, we compute smoothed curvature from two bone contours from the MPEG-7 database \cite{latecki2000shape} using the technique of Mokhtarian and Mackworth \cite{mokhtarian1992theory}.  Figure~\ref{fig:CircularDOPE_Example} shows the result.  As this example shows, since curvature is an isometry invariant, and since C-DOPE is blind to parameterization, C-DOPE can be used to match loops that have been rotated, translated, and re-parameterized.  We explore this in depth in Section~\ref{sec:experimentcdope}.

\section{Experiments}
\label{sec:experiments}
We now empirically examine the performance of DOPE and some related algorithms on a variety of classification tasks in real data.  We focus on algorithms that, like DOPE, are both {\em parameter free} and {\em unsupervised}; that is, they can provide a similarity measure with no training data.  Following the work of \cite{dau2019ucr}, we evaluate the performance of each dataset on a particular method by holding out each example and ranking the other examples according to a chosen similarity measure.  Let $\ell$ be the class label of a time series $\vect{x}$, and let $\vect{c} = [\vect{c}_1, \vect{c}_2, \hdots \vect{c}_N]$ be the class labels of the rest of the time series in decreasing order of similarity (e.g. increasing DOPE distance).  If the similarity measure captures class membership appropriately, the first element is the most likely to be in the same class as $\vect{x}$, and the last element is the least likely to be in the same class as $\vect{x}$.  To quantify this, use mean rank (MR), as is standard in UCR comparisons.  However, as the UCR authors note, mean rank can be misleading and paradoxical over many datasets in practice \cite{benavoli2016should}, so we also report the \textbf{Mean Average Precision (MAP)}, which is more robust to outliers items in a class that are ranked very late compared to the others.  See Appendix~\ref{sec:mrmap} for more details.

\subsection{UCR Time Series Dataset}

\begin{figure}[h]
\centering
\includegraphics[width=\textwidth]{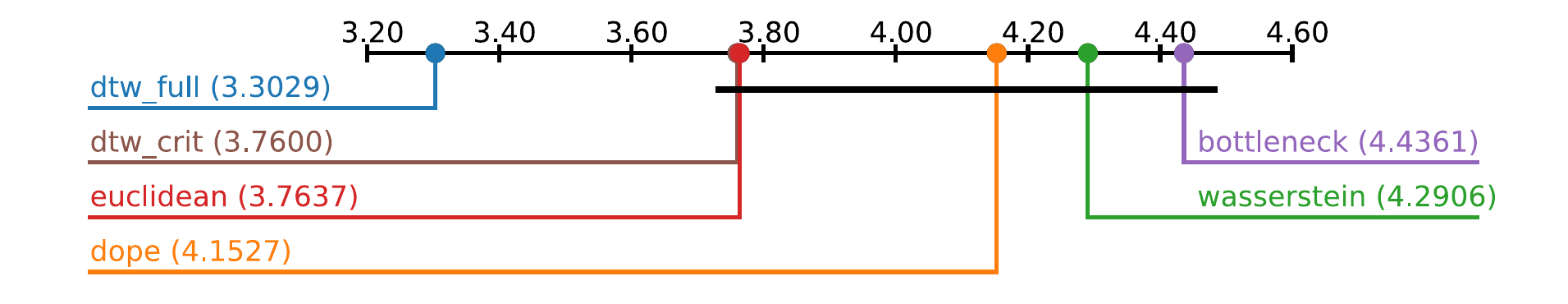}
\caption{Mean rank (MR) on the UCR time series dataset.  Lower values are better}
\label{fig:UCRMR}
\end{figure}

\begin{figure}
\centering
\includegraphics[width=\textwidth]{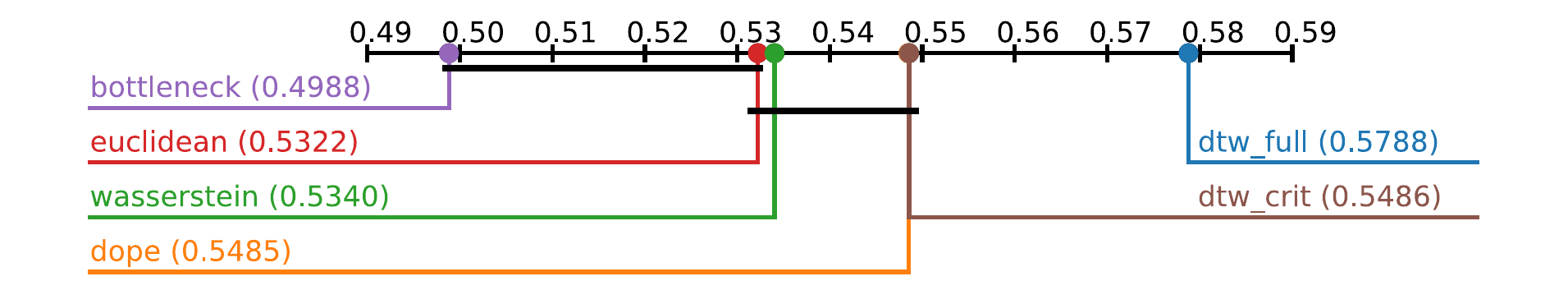}
\caption{Mean Average Precision (MAP) on the UCR time series dataset.  Higher is better}
\label{fig:UCRMAP}
\end{figure}

We first evaluate several time series techniques on the UCR Time Series database \cite{dau2019ucr}, which consists of 128 different time series classification across a wide variety of application domains.  We evaluate MR and MRR on DOPE, Bottleneck/Wasserstein distance, point-by-point Euclidean distance (with zeropadding for unequal length signals), and DTW.  We also report results of DTW on the critical point time series, which is similar to DOPE, but which still does not satisfy metric properties or stability (Appendix~\ref{sec:appendixdtwpropeties}).  Following recommendations from the UCR authors, we create ``critical distance plots'' to summarize the MR (Figure~\ref{fig:UCRMR}) and MAP (Figure~\ref{fig:UCRMAP}) statistics across all datasets.  Similarity technique are grouped together into cliques using pairwise Wilcoxon signed-rank tests \cite{demvsar2006statistical}.  Pairs of similarity measurements with a Holm-corrected \cite{holm1979simple} Wilcoxon $p$-value under 0.05 are grouped together with a line.  As the results show, DOPE performs better than the bottleneck and Wasserstein distances, and similarly to DTW on critical point time series (though we know that DOPE has better theoretical properties).  Interestingly, though, DTW performs better than all of the above, which suggests that in practice, important class information may be contained in parameterizations.  Thus, one should always try DTW and other simple off-the-shelf methods first before ruling out parameterization as a nuisance.

\begin{figure}
\centering
\includegraphics[width=\textwidth]{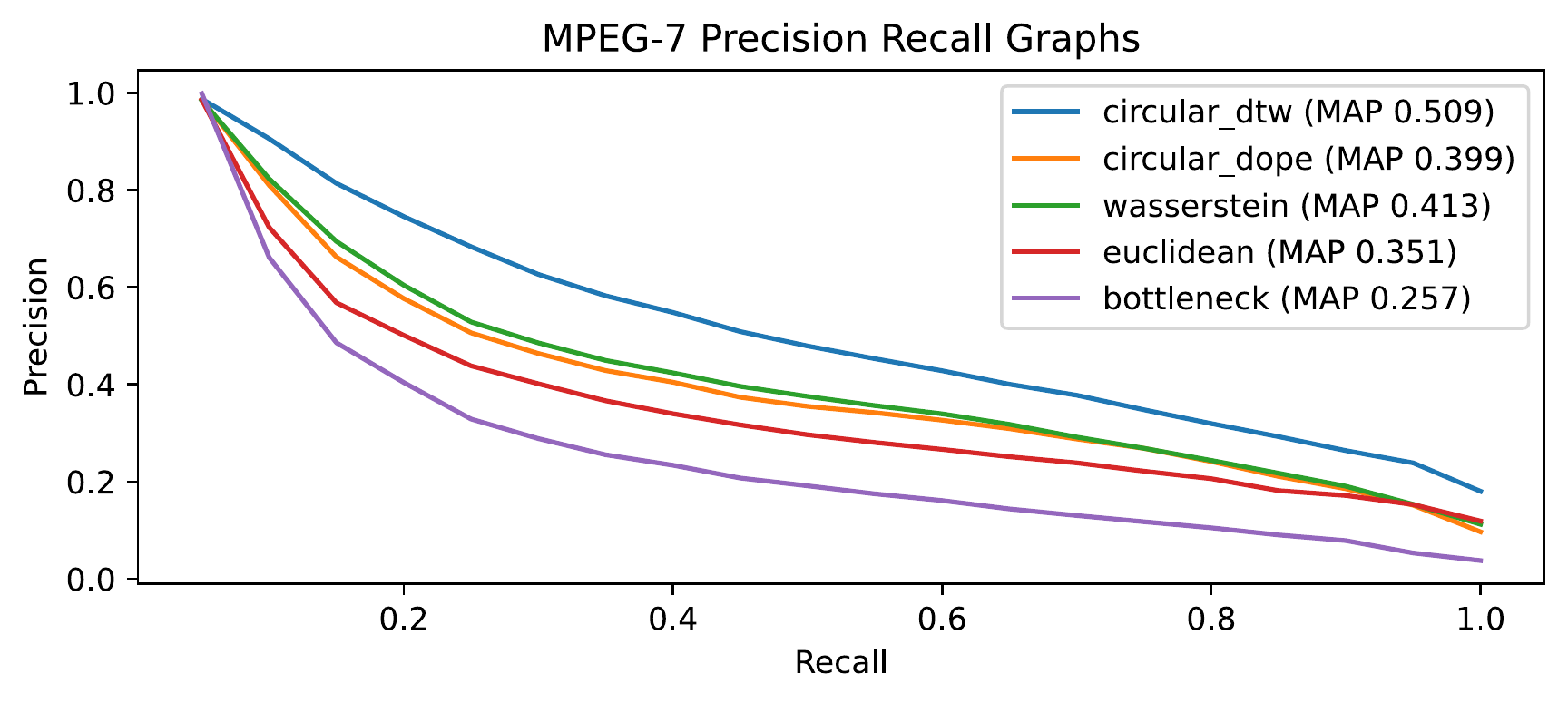}
\caption{Precision-recall curves on the MPEG-7 dataset}
\label{fig:mpeg7}
\end{figure}

\subsection{MPEG-7 Shape Contours Dataset}
\label{sec:experimentcdope}
We also evaluate C-DOPE individually on the MPEG-7 dataset, which consists of binary images of 72 different shape classes (e.g. apple, car, deer, octopus), each with 20 shapes.  We convert each image to a closed loop by using marching squares to extract the largest boundary component between the foreground and background.  We then compute the signed curvature as our time series value at each pixel using the technique of Mokhtarian/Makworth \cite{mokhtarian1992theory}.  Figure~\ref{fig:CircularDOPE_Example} shows an example of two such loops from the ``bone'' class.  We then compare all pairwise loops using C-DOPE, Bottleneck/Wasserstein distance between persistence diagrams filtered on the circular domain, and circular dynamic time warping \cite{marzal2005dynamic}.  Figure~\ref{fig:mpeg7} shows the resulting precision recall curves, which overall match the same trends as the UCR time series experiment, except Wasserstein does slightly better than DOPE for low recall.

\subsection{Experiment Discussion}
\label{sec:discussion}


Interestingly, while we believe that DOPE is uniquely useful at performing a sparse alignment between warped time series (e.g. Figure~\ref{fig:ECGConceptDOPE} and Figure~\ref{fig:CircularDOPE_Example}), DOPE does not reach the classification performance of off-the-shelf DTW on the datasets we examined, so classification may not be the best application of DOPE.  Overall, more stable techniques sacrifice informativity, and this can even be seen in the consistent superior performance of Wasserstein distance over Bottleneck distance, even though the latter is a stronger $L_{\infty}$-stable.  Regardless, this should serve as a cautionary tale for those seeking to apply topological techniques; one should always check assumptions about how important certain theoretical properties are in practice.

\bibliographystyle{plain}
\bibliography{main}

\section{Appendix}

\subsection{Dynamic Time Warping: Definition And Examples}
\label{sec:dtwappendix}

We provide more information here on dynamic time warping, including counter-examples to the triangle inequality and stability.  Central to the definition of DTW is the notion of a {\em warping path} $\mathcal{W}$ between two time series:

\begin{definition}
\label{def:warppath}
A {\em warping path} $\mathcal{W}$ is a sequence of pairs between the index sets of two time series, with $M$ and $N$ samples, respectively, satisfying
\begin{itemize}
    \item $(0, 0), (M-1, N-1) \in \mathcal{W}$; that is, a warping path starts at the beginning of both time series and ends at the end of both
    
    \item $W_i - W_{i-1} \in \left\{ \begin{array}{c} (1, 0) \\ (0, 1) \\ (1, 1) \end{array} \right\}$; that is, subsequent paired indices differ by at most one, but at least one index must advance forward
\end{itemize}
\end{definition}

The dynamic time warping similarity between two 1D time series $\vect{x}$ and $\vect{y}$ is then defined as 

\begin{definition}
    \label{def:dtw}
    \[ \text{DTW}(\vect{x}, \vect{y}) = \min_{\mathcal{W}} \left\{\sum_{(i, j) \in \mathcal{W}} |\vect{x}_i - \vect{y}_j|\right \} \]
\end{definition}

Dynamic time warping can be computed efficiently using dynamic programming.  Let $D_{ij}$ be the subproblem of aligning the first $i$ points of $\vect{x}$ to the first $j$ points of $\vect{y}$.  Then the following recurrence holds:

\begin{equation}
    \text{DTW}_{ij} = |\vect{x}_i - \vect{y}_j| + \min \left\{ \begin{array}{c}  \text{DTW}_{i-1, j} \\ \text{DTW}_{i, j-1} \\ \text{DTW}_{i-1, j-1} \end{array} \right\}
\end{equation}
with base conditions $\text{DTW}_{1j} = \displaystyle\sum_{n = 0}^j |\vect{x}_0 - \vect{y}_j|$ and $\text{DTW}_{i0} = \displaystyle \sum_{m = 0}^i |\vect{x}_i - \vect{y}_1|$.  A naive application of these recurrences yields an $O(MN)$ algorithm, though it has been shown that it is possible to compute this distance in time $O(N^2 \log \log \log(N) / \log \log(N)$ for $N$ as $O(M)$ for 1D time series \cite{gold2018dynamic}.  

\subsubsection{Triangle Inequality And Stability}
\label{sec:appendixdtwpropeties}
Unfortunately, dynamic time warping does not satisfy the triangle inequality.  Consider the following family of three time series:

\begin{enumerate}
    \item $\vect{a} = [-1, -1, ..., -1, 0]$, where $-1$ is repeated $m > 1$ times
    \item $\vect{b} = [-1, 0, 1]$
    \item $\vect{c} = [0, 1, 1, ..., 1]$, where $1$ is repeated $n > 1$ times
\end{enumerate}

Then $\text{DTW}(\vect{a}, \vect{b}) = 1$, $\text{DTW}(\vect{b}, \vect{c}) = 1$, and $\text{DTW}(\vect{a}, \vect{c}) = m + n$.  Thus, $\text{DTW}(\vect{a}, \vect{c}) > \text{DTW}(\vect{a}, \vect{b}) + \text{DTW}(\vect{b}, \vect{c})$.  Intuitively, something has gone wrong when a warping path is forced to match many of the same element in a row.  A method based on critical points, on the other hand, would collapse such a run of values.

Dynamic time warping also violates stability.  Consider two time series that are samples of the exact same function, and suppose they take the exact same samples except within a monotonic region that goes from $0$ to $1$.  Let $\vect{x}$ just take these two samples, so $\vect{x} = [\vect{a}, 0, 1, \vect{b}]$ (where $\vect{a}$ and $\vect{b}$ represent arbitrary sub time series), but let $\vect{y}$ take $n+1$ samples equally spaced from $0$ to $1$ over that interval, so $\vect{y} = [\vect{a}, 1/n, 2/n, \ldots , (n-1)/n, 1, \vect{b}]$. Then 
\[\text{DTW}(\vect{x}, \vect{y}) \approx 2 \sum_{i = 1}^{n/2} i/n = n/4 + 1\]
There are no critical points between $a$ and $b$, so any nonzero distance will violate stability.  Intuitively, the problem is that we have sampled many more regular points in $\mathbf{y}$ than in $\mathbf{x}$.  Since a merge tree only tracks critical points, any metric based off of a merge tree will avoid problems in this particular example.

A natural followup question is whether DTW on a critical point time series can ameliorate the issues that these two examples expose.  Unfortunately, this is not the case either.  Consider the following critical point time series

\begin{enumerate}
    \item $\vect{a} = [-1-\epsilon, -1, -1-\epsilon, -1 ..., -1-\epsilon, 0, -1]$, length($\vect{a}$) = 2m + 1
    \item $\vect{b} = [-1, 1, -1]$
    \item $\vect{c} = [0, 1+\epsilon, 1, 1+\epsilon, 1, \hdots, 1+\epsilon, 1]$, length($\vect{c}$) = 2n + 1
\end{enumerate}

then $\text{DTW}(\vect{a}, \vect{b}) = 1 + m \epsilon$, $\text{DTW}(\vect{b}, \vect{c}) = 3 + n \epsilon$, and $\text{DTW}(\vect{a}, \vect{c}) = (m+n) (2 + \epsilon) $.  Once again, this example violates the triangle inequality for all $m, n \geq 1$.

It is worth noting that in practice, as the authors of \cite{mueen2016} show, it is more likely that DTW will satisfy the triangle inequality if the warping paths are restricted to the so-called ``Sakoe-Chiba band'' \cite{sakoe1978dynamic}; that is, $|i-j| < c$ for all $(i, j) \in \mathcal{W}$.

\subsection{Details of the proof of the triangle inequality for DOPE}
\label{sec:dopetriangleappendix}
Details for proof of Lemma \ref{lem: triangle edit operations} giving a triangle inequality on individual edits:
\begin{proof}
We will prove the triangle inequality in the cases where $p$ and $r$ are min-max pairs, $p=(\vect{x}^c_i,\vect{x}^c_{i+1})$, $r=(\vect{z}^c_k,\vect{z}^c_{k+1})$.  There are two such cases corresponding to the two paths from $p$ to $r$. Let Path 1 be the path formed when $q$ is a min-max pair $q=(\vect{y}_j^c, \vect{y}_{j+1}^c)$, and let Path 2 be the path formed by $q=\Lambda$. Then Path 1 consists of matching min-max pair $p=(\vect{x}_i^c, \vect{x}_{i+1}^c)$ with $(\vect{y}_j^c, \vect{y}_{j+1}^c)$ followed by matching $(\vect{y}_j^c, \vect{y}_{j+1}^c)$ with $(\vect{z}_k^c, \vect{z}_{k+1}^c)$. Contrastingly, Path 2 involves deleting $(\vect{x}_i^c, \vect{x}_{i+1}^c)$ followed by inserting $(\vect{z}_k^c, \vect{z}_{k+1}^c)$.  For Path 1, we wish to show that:
    \begin{align*}
        \gamma((\vect{x}_i^c, \vect{x}_{i+1}^c) \rightarrow (\vect{z}_k^c, \vect{z}_{k+1}^c)) \leq \gamma((\vect{x}_i^c, \vect{x}_{i+1}^c) \rightarrow (\vect{y}_j^c, \vect{y}_{j+1}^c)) + \gamma ((\vect{y}_j^c, \vect{y}_{j+1}^c) \rightarrow (\vect{z}_k^c, \vect{z}_{k+1}^c))
    \end{align*}
    This requires showing that:
    \begin{align*}
        \min{\begin{cases}
        |\vect{x}^c_i-\vect{z}^c_k| + |\vect{x}^c_{i+1}-\vect{z}^c_{k+1}| \\
        |\vect{x}^c_i-\vect{x}^c_{i+1}| + |\vect{z}^c_k-\vect{z}^c_{k+1}| \\
        \end{cases}}
        \leq
        \min{\begin{cases}
        |\vect{x}^c_i-\vect{y}^c_j| + |\vect{x}^c_{i+1}-\vect{y}^c_{j+1}| \\
        |\vect{x}^c_i-\vect{x}^c_{i+1}| + |\vect{y}^c_j-\vect{y}^c_{j+1}| \\
        \end{cases}}
        +
        \min{\begin{cases}
        |\vect{y}^c_j-\vect{z}^c_k| + |\vect{y}^c_{j+1}-\vect{z}^c_{k+1}| \\
        |\vect{y}^c_j-\vect{y}^c_{j+1}| + |\vect{z}^c_k-\vect{z}^c_{k+1}| \\
        \end{cases}}
    \end{align*}
    To do so, we will examine each of the four resulting cases separately.
    \begin{align*}
    \gamma(p \rightarrow r) &= \gamma((\vect{x}_i,\vect{x}_{i+1}) \rightarrow (\vect{z}_k,\vect{z}_{k+1})) \\
    & = \min{\begin{cases}
    |\vect{x}_i-\vect{z}_k| + |\vect{x}_{i+1}-\vect{z}_{k+1}| \\
    |\vect{x}_i-\vect{x}_{i+1}| + |\vect{z}_k-\vect{z}_{k+1}| \\
    \end{cases}}
    \end{align*}
    
    CASE 1:
    \begin{align*}
    & \min{\begin{cases}
    |\vect{x}_i-\vect{z}_k| + |\vect{x}_{i+1}-\vect{z}_{k+1}| \\
    |\vect{x}_i-\vect{x}_{i+1}| + |\vect{z}_k-\vect{z}_{k+1}| \\
    \end{cases}} \leq |\vect{x}_i-\vect{z}_k| + |\vect{x}_{i+1}-\vect{z}_{k+1}| \\
    & = |\vect{x}_i-\vect{y}_j+\vect{y}_j-\vect{z}_k| + |\vect{x}_{i+1}-\vect{y}_{j+1}+\vect{y}_{j+1}-\vect{z}_{k+1}| \\
    & \leq |\vect{x}_i-\vect{y}_j| + |\vect{y}_j-\vect{z}_k| + |\vect{x}_{i+1}-\vect{y}_{j+1}| +|\vect{y}_{j+1}-\vect{z}_{k+1}| \\
    & = \left( |\vect{x}_i-\vect{y}_j| + |\vect{x}_{i+1}-\vect{y}_{j+1}| \right) + \left( |\vect{y}_j-\vect{z}_k| + |\vect{y}_{j+1}-\vect{z}_{k+1}| \right) \\
    \end{align*}

    CASE 2:
    \begin{align*}
    & \min{\begin{cases}
    |\vect{x}_i-\vect{z}_k| + |\vect{x}_{i+1}-\vect{z}_{k+1}| \\
    |\vect{x}_i-\vect{x}_{i+1}| + |\vect{z}_k-\vect{z}_{k+1}| \\
    \end{cases}} \leq |\vect{x}_i-\vect{x}_{i+1}| + |\vect{z}_k-\vect{z}_{k+1}| \\
    & = |\vect{x}_i-\vect{y}_j+\vect{y}_j-\vect{y}_{j+1}+\vect{y}_{j+1}-\vect{x}_{i+1}| + |\vect{z}_k-\vect{z}_{k+1}|\\
    & \leq |\vect{x}_i-\vect{y}_j|+|\vect{y}_j-\vect{y}_{j+1}|+|\vect{y}_{j+1}-\vect{x}_{i+1}| + |\vect{z}_k-\vect{z}_{k+1}| \\
    & = \left(|\vect{x}_i-\vect{y}_j|+ |\vect{x}_{i+1}-\vect{y}_{j+1}| \right) + \left(|\vect{y}_j-\vect{y}_{j+1}|+|\vect{z}_k-\vect{z}_{k+1}| \right)\\
    \end{align*}

    CASE 3:
    \begin{align*}
    & \min{\begin{cases}
    |\vect{x}_i-\vect{z}_k| + |\vect{x}_{i+1}-\vect{z}_{k+1}| \\
    |\vect{x}_i-\vect{x}_{i+1}| + |\vect{z}_k-\vect{z}_{k+1}| \\
    \end{cases}} \leq |\vect{x}_i-\vect{x}_{i+1}| + |\vect{z}_k-\vect{z}_{k+1}| \\
    & = |\vect{x}_i-\vect{x}_{i+1}| + |\vect{z}_k-\vect{y}_j+\vect{y}_j-\vect{y}_{j+1}+\vect{y}_{j+1}-\vect{z}_{k+1}|\\
    & \leq |\vect{x}_i-\vect{x}_{i+1}| + |\vect{z}_k-\vect{y}_j|+|\vect{y}_j-\vect{y}_{j+1}|+|\vect{y}_{j+1}-\vect{z}_{k+1}|\\
    & = \left(|\vect{x}_i-\vect{x}_{i+1}| + |\vect{y}_j-\vect{y}_{j+1}|\right) + \left(|\vect{y}_j-\vect{z}_k|+|\vect{y}_{j+1}-\vect{z}_{k+1}| \right)\\
    \end{align*}

    CASE 4:
    \begin{align*}
    & \min{\begin{cases}
    |\vect{x}_i-\vect{z}_k| + |\vect{x}_{i+1}-\vect{z}_{k+1}| \\
    |\vect{x}_i-\vect{x}_{i+1}| + |\vect{z}_k-\vect{z}_{k+1}| \\
    \end{cases}} \leq |\vect{x}_i-\vect{x}_{i+1}| + |\vect{z}_k-\vect{z}_{k+1}| \\
    & \leq \left(|\vect{x}_i-\vect{x}_{i+1}| + |\vect{y}_j-\vect{y}_{j+1}|\right) + \left(|\vect{y}_j-\vect{y}_{j+1}| + |\vect{z}_k-\vect{z}_{k+1}|\right)
    \end{align*}
    
    Therefore:
    \begin{align*}
    & \min{\begin{cases}
        |\vect{x}^c_i-\vect{z}^c_k| + |\vect{x}^c_{i+1}-\vect{z}^c_{k+1}| \\
        |\vect{x}^c_i-\vect{x}^c_{i+1}| + |\vect{z}^c_k-\vect{z}^c_{k+1}| \\
        \end{cases}}
        \leq
        \min{\begin{cases}
        |\vect{x}^c_i-\vect{y}^c_j| + |\vect{x}^c_{i+1}-\vect{y}^c_{j+1}| \\
        |\vect{x}^c_i-\vect{x}^c_{i+1}| + |\vect{y}^c_j-\vect{y}^c_{j+1}| \\
        \end{cases}}
        +
        \min{\begin{cases}
        |\vect{y}^c_j-\vect{z}^c_k| + |\vect{y}^c_{j+1}-\vect{z}^c_{k+1}| \\
        |\vect{y}^c_j-\vect{y}^c_{j+1}| + |\vect{z}^c_k-\vect{z}^c_{k+1}| \\
        \end{cases}} \\
    & = \gamma((\vect{x}_i,\vect{x}_{i+1}) \rightarrow (\vect{y}_j, \vect{y}_{j+1})) + \gamma ((\vect{y}_j, \vect{y}_{j+1})\rightarrow (\vect{z}_k, \vect{z}_{k+1})) \\
    & = \gamma (p \rightarrow q) +\gamma (q \rightarrow r)
    \end{align*}
    so the triangle inequality holds for Path 1.\\
    
    For Path 2 with $q=\Lambda$, we have
    \begin{align*}
    \gamma(p \rightarrow r) &= \gamma((\vect{x}_i, \vect{x}_{i+1}) \rightarrow (\vect{z}_k,\vect{z}_{k+1})) \\\\
    & = \min{\begin{cases}
    |\vect{x}_i-\vect{z}_k| + |\vect{x}_{i+1}-\vect{z}_{k+1}| \\
    |\vect{x}_i-\vect{x}_{i+1}| + |\vect{z}_k-\vect{z}_{k+1}| \\
    \end{cases}}  \\\\ 
    & \leq |\vect{x}_i-\vect{x}_{i+1}| + |\vect{z}_k-\vect{z}_{k+1}| \\\\
    & = \gamma((\vect{x}_i,\vect{x}_{i+1}) \rightarrow \Lambda) + \gamma (\Lambda \rightarrow (\vect{z}_k, \vect{z}_{k+1})) \\\\
    & = \gamma (p \rightarrow q) +\gamma (q \rightarrow r).\\
    \end{align*}

    The other six cases are proved either trivially or in similar fashion to what is shown.
    \end{proof}

We next describe how to form an alignment given a sequence of edits.
\begin{lemma} \label{dopeleqedits}
For a finite sequence of edit operations $S=s_1, s_2, ..., s_n$ converting time series $\vect{x}^c$ into time series $\vect{y}^c$, 
$$\dope{\vect{x}, \vect{y}} \leq \gamma(S) = \sum_{k=1}^{n} \gamma (s_k)$$
\end{lemma}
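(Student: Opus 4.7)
The plan is to construct, from the sequence $S$, an alignment $A = (M, R_{\vect{x}^c}, R_{\vect{y}^c})$ of $\vect{x}$ and $\vect{y}$ satisfying $C(A) \leq \gamma(S)$. Since by Definition~\ref{def:dope} $\dope{\vect{x},\vect{y}}$ is the minimum of $C(A)$ over all alignments, the stated inequality then follows immediately from $\dope{\vect{x},\vect{y}} \leq C(A) \leq \gamma(S)$.

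First I would translate the edits in $S$ into alignment data one by one. A one-point matching $\vect{x}^c_i \to \vect{y}^c_j$ contributes the pair $(\vect{x}^c_i, \vect{y}^c_j)$ to $M$. A pairwise deletion $(\vect{x}^c_i, \vect{x}^c_{i+1}) \to \Lambda$ contributes that pair to $R_{\vect{x}^c}$, and symmetrically an insertion $\Lambda \to (\vect{y}^c_j, \vect{y}^c_{j+1})$ contributes to $R_{\vect{y}^c}$. The only edit requiring a choice is a pairwise matching $(\vect{x}^c_i, \vect{x}^c_{i+1}) \to (\vect{y}^c_j, \vect{y}^c_{j+1})$, whose cost is a minimum of two terms: if the diagonal term $|\vect{x}^c_i - \vect{y}^c_j| + |\vect{x}^c_{i+1} - \vect{y}^c_{j+1}|$ attains the minimum I add both one-point matchings to $M$, while if the cross term $|\vect{x}^c_i - \vect{x}^c_{i+1}| + |\vect{y}^c_j - \vect{y}^c_{j+1}|$ attains it I place the two pair-deletions into $R_{\vect{x}^c}$ and $R_{\vect{y}^c}$ respectively. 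With these rules, the contribution of each edit $s_k$ to $C(A)$ is exactly $\gamma(s_k)$, so summing gives $C(A) = \gamma(S)$.

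Next I would verify that this triple is a bona fide alignment. Because $S$ converts $\vect{x}^c$ into $\vect{y}^c$, every critical point of each series is acted on by exactly one edit, so each one appears in exactly one pair of $M \cup R_{\vect{x}^c} \cup R_{\vect{y}^c}$. The pairs placed into $R_{\vect{x}^c}$ or $R_{\vect{y}^c}$ are always pairs of consecutive critical points; since a critical point time series alternates between minima and maxima, these are min-max pairs as required. The cardinality equality $|M| = |\vect{x}^c \setminus \{\cdots\}| = |\vect{y}^c \setminus \{\cdots\}|$ is then a straightforward tally, since each removal consumes two entries of one series while each matching consumes one entry of each.

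The main obstacle is verifying the order-preservation property on $M$, namely that the matchings do not cross. This is not immediate from the definition of an edit operation, and I would handle it by arguing that any sequence actually realizing the conversion $\vect{x}^c \to \vect{y}^c$ must correspond to a non-crossing pairing: a crossing would force two matched pairs $(\vect{x}^c_{i_1}, \vect{y}^c_{j_1})$ and $(\vect{x}^c_{i_2}, \vect{y}^c_{j_2})$ with $i_1 < i_2$ but $j_1 > j_2$, which cannot be reconciled with the left-to-right indexing of the resulting series. A secondary subtlety is the paper's rule that two one-point matchings on adjacent indices must be charged jointly via the pairwise-matching cost; I would address this by first grouping any such adjacent one-point matchings in $S$ into a single pairwise matching before applying the translation rules above, after which the per-edit accounting remains exact and the argument goes through unchanged.
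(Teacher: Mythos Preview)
Your proposal is correct and follows essentially the same approach as the paper: build an alignment $A$ from the edit sequence $S$ by routing deletions to $R_{\vect{x}^c}$, insertions to $R_{\vect{y}^c}$, and matchings to $M$, then invoke $\dope{\vect{x},\vect{y}} \leq C(A)$. You are in fact more explicit than the paper's own proof, which neither splits the pairwise-matching case on which term realizes the minimum nor verifies the alignment axioms; your treatment of those points is a refinement of, not a departure from, the paper's argument.
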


\begin{proof}
Each edit $s_k$ is either an insertion, deletion, or pairing of a point in $\vect{x}^c$ with a point in $\vect{y}^c$. View the insertion of the pair $(a,b)$ into $\vect{x}^c$ as the deletion of the pair $(a,b)$ from $\vect{y}$. This pair must exist in $\vect{y}^c$ since the only allowed insertions are min-max pairs from the target time series. Form the removal sets $R_{\vect{x}^c}$ and  $R_{\vect{y}^c}$. The pairing operations $s \in S$ form the matching set $M$. This yields an alignment $A = (M,R_{\vect{x}^c},R_{\vect{y}^c})$ between time series $\vect{x}$ and $\vect{y}$. Since $\dope{\vect{x}, \vect{y}}$ is by definition the infimum alignment cost over all such alignments, we see that $\dope{\vect{x}, \vect{y}} \leq C(A)$. Hence
\begin{align*}
    C(A) &= \displaystyle\sum_{(\vect{x}^c_i, \vect{y}^c_j) \in M} |\vect{x}^c_i - \vect{y}^c_j| + \displaystyle\sum_{(\vect{x}^c_i, \vect{x}^c_{i+1}) \in R_{\vect{x}^c}} |\vect{x}^c_i - \vect{x}^c_{i+1}| + \displaystyle\sum_{(\vect{y}^c_j, \vect{y}^c_{j+1}) \in R_{\vect{y}^c}} |\vect{y}^c_j - \vect{y}^c_{j+1}| \\\\
    &= \displaystyle\sum_{(\vect{x}^c_i, \vect{y}^c_j) \in M} \gamma(\vect{x}^c_i \rightarrow \vect{y}^c_j) + \sum_{(\vect{x}^c_i, \vect{x}^c_{i+1}) \in R_{\vect{x}^c}} \gamma((\vect{x}^c_i, \vect{x}^c_{i+1}) \rightarrow \Lambda) + \sum_{(\vect{y}^c_i, \vect{y}^c_{i+1}) \in R_{\vect{y}^c}} \gamma((\vect{y}^c_i, \vect{y}^c_{i+1}) \rightarrow \Lambda) \\\\
    &= \displaystyle\sum_{s \in S} \gamma (s)\\\\
    &= \displaystyle\sum_{k=1}^{n} \gamma (s_k)\\
\end{align*}
so $$\dope{\vect{x}, \vect{y}} \leq \displaystyle\sum_{k=1}^{n} \gamma (s_k).$$
\end{proof}

Finally, we provide details of proof of Lemma \ref{compositionlemma}: 
\begin{proof}
We can intuitively visualize the described resolutions of overlap in Figure 6 as collapsing all insertions and deletions to leave only the points in $\vect{x}^c$ and $\vect{z}^c$. We now provide the explicit computaion for the even and odd overlap cases. Without loss of generality, we consider the simplest examples of even overlap (2 insertions or deletions) and odd overlap (3 insertions or deletions), and use simplified rather than generif indices as well.\\
    Even Case:
    \begin{align*} 
    \gamma(x \rightarrow z) = |x-z| &= |x-y_1+y_1-y_2+y_2-y_3+y_3-z| \\
    &\leq |x-y_1|+|y_1-y_2|+|y_2-y_3|+|y_3-z| \\
    &= \gamma(x \rightarrow y_1) + \gamma(\Lambda \rightarrow (y_2, y_3)) + \gamma((y_1,y_2) \rightarrow \Lambda) + \gamma(y_3 \rightarrow z)    
    \end{align*}
    Odd Case:
    \begin{align*} 
    \gamma(\Lambda \rightarrow (z_1,z_2)) &= |z_1-z_2| = |z_1-y_1+y_1-y_2+y_2-y_3+y_3-y_4+y_4-z| \\
    &\leq |z-y_1|+|y_1-y_2|+|y_2-y_3|+|y_3-y_4|+|y_4-z| \\
    &= \gamma(y_1 \rightarrow z_1) + \gamma(\Lambda \rightarrow (y_1, y_2)) + \gamma((y_2,y_3) \rightarrow \Lambda) + \gamma(\Lambda \rightarrow (y_3, y_4)) + \gamma(y_4 \rightarrow z_2)    
    \end{align*}\\
\end{proof}

\subsection{DOPE Dynamic Programming Algorithm}
\label{sec:dynamicdope}
We present in detail the algorithm to compute DOPE, and we annotate the time complexity of each step.  Let $M$ and $N$ be the length of time series $\vect{x}$ and $\vect{y}$, respectively, and let $M_c$ and $N_c$ be the number of critical points on each time series, respectively.  The algorithm proceeds by filling in all subproblems in an $(M_c+1) \times (N_c+1)$ dynamic programming table as follows

\begin{enumerate}
    \item $O(M+N)$: Compute the critical point time series $\vect{x}_c$ and $\vect{y}_c$ from $\vect{x}$ and $\vect{y}$, respectively
    \item $O(M_c + N_c)$: Fill in the base conditions $d_{0, 0}$ and $d_{i, 0}$ and $d_{0, j}$ (using $\infty$ for $i$ and $j$ odd).
    \item $O(M_c N_c)$: Fill in the rest of the table in lexicographic order on the tuples $(i, j)$, computing the min over the matching or two deletion possibilities.  This order ensures that all subproblems are computed before they are needed.  Since each entry $d_{i,j}$ can be computed in constant time given answers to the appropriate subproblems.
\end{enumerate}

\subsection{Rank Evaluation Statistics} 
\label{sec:mrmap}

\textbf{Mean rank (MR)} is defined as follows
\begin{definition}
    \[ MR = \left( \sum_{i=1}^N \left\{  \begin{array}{cc} i & c_i = \ell \\ 0 & \text{otherwise} \end{array} \right\}  \right) / \left( \sum_{i=1}^N \left\{  \begin{array}{cc} 1 & c_i = \ell \\ 0 & \text{otherwise} \end{array} \right\}  \right) \]
\end{definition}
In other words, the MR is the average rank of items that are in the same class as $\vect{x}$.  This is the go-to measure for the UCR time series dataset \cite{dau2019ucr}, and it is reported as an average of the MR over all examples.  

\textbf{Mean Average Precision (MAP)} is based on notions of \textbf{precision (P)} and \textbf{recall (R)}, each of which is defined up to a particular index $i$ in the ranked list:

\begin{definition}
$P_i = \left( \sum_{j=1}^i \left\{  \begin{array}{cc} j & c_j = \ell \\ 0 & \text{otherwise} \end{array} \right\} \right) / i$
\end{definition}

\begin{definition}
    $R_i = \left( \sum_{j=1}^i \left\{  \begin{array}{cc} 1 & c_j = \ell \\ 0 & \text{otherwise} \end{array} \right\}  \right) / \left( \sum_{j=1}^N \left\{  \begin{array}{cc} 1 & c_j = \ell \\ 0 & \text{otherwise} \end{array} \right\}  \right)$
\end{definition}

Intuitively, precision is a measure of the proportion of correct examples up to a certain point in the ranked list, while recall is a measure of the total examples in the same class up to a certain point.  There is an inherent trade-off between the two, and one often sees downward sloping curves when plotting ``precision-recall curves'' (e.g. Figure~\ref{fig:mpeg7}).  One way to summarize these curves with a single number is to simply take the average of all precisions (AP) over all unique recalls (i.e. the area under the precision recall curves).  Then, averaging the AP over all examples in the dataset gives us the MAP.

\end{document}